\documentclass[letterpaper, 10 pt, conference]{ieeeconf}
\usepackage{comment}
\usepackage{amssymb, amsmath,graphicx,charter, latexsym}
\usepackage{amsfonts}
\usepackage{bbm}
\usepackage{algorithm}
\newtheorem{definition}{Definition}
\newtheorem{lemma}{Lemma}

\newtheorem{corollary}{Corollary}
\newtheorem{theorem}{Theorem}

\newtheorem{assumption}{Assumption}

\usepackage{graphicx}
\usepackage{url}
\usepackage{epstopdf}
\usepackage{gensymb}
\usepackage{caption,xcolor}
\usepackage{subcaption}

\IEEEoverridecommandlockouts     

\title{Optimal Information Updating based on Value of Information }
\author{Rahul Singh, Gopal Krishna Kamath and P. R. Kumar
\thanks{G. K. Kamath and P. R. Kumar are with Department of ECEN, Texas A\&M University}}
\begin{document}
\maketitle
\section{Abstract}
We address the problem of how to optimally schedule data packets over an unreliable channel in order to minimize the estimation error of a simple-to-implement remote linear estimator using a constant ``Kalman'' gain to track the state of a Gauss Markov process. The remote estimator receives time-stamped data packets which contain noisy observations of the process. Additionally, they also contain the information about the ``quality'' of the sensor\slash source, i.e., the variance of the observation noise that was used to generate the packet. In order to minimize the estimation error, the scheduler needs to use both while prioritizing packet transmissions. It is shown that a simple index rule that calculates the \emph{value of information} (VoI) of each packet, and then schedules the packet with the largest current value of VoI, is optimal. The VoI of a packet decreases with its age, and increases with the precision of the source. Thus, we conclude that, for constant filter gains, a policy which minimizes the age of information does not necessarily maximize the estimator performance.
\section{Introduction}\label{sec:intro}
There is a growing interest in providing real-time status updates in order to serve applications that depend upon the freshness of information available to them. For example, timely weather updates, stock prices information, Internet of Things devices, etc. Thus, the problem of ensuring timely status
updates in real-time applications has received much attention ~\cite{kaul2012real,DBLP:conf/allerton/KadotaUSM16,DBLP:journals/ton/KadotaSUSM18}. The ``Age of Information'' (AoI), captures the freshness of the information that is available with the end application, 
and recent works have designed scheduling policies\slash network controllers with the objective of  minimizing the AoI in scenarios where multiple applications share a common network infrastructure. For example, more recently generated data packets are prioritized over older packets~\cite{DBLP:conf/isit/BedewySS17}, or the packet of a user that currently has a larger value of age at the destination is prioritized over a user having smaller age, etc.

However, in many applications, it is not only the freshness of a packet that matters, but also the content that it contains.
We will be interested in applications that generate an  estimate of a process after receiving data packets that contain information about this process. Since the packets contain only noisy observations of the process, and not the ``true'' value of the process, it is also important that a packet which contains measurements that have a higher precision\slash low noise, must be prioritized over a packet that has the same age but has a lower precision measurement. 

In summary, the quality of a packet is judged not only on the basis of how ``fresh'' the packet is, i.e., its age or the time since it was generated at the source, but it also depends upon the following two factors 
\begin{enumerate}
\item How much information it contain about the process that is being monitored,
\item How important this information is to the algorithm that is being used to update the status at the destination node. 
\end{enumerate}
Regarding 1), we note that the information content in a packet is described by the joint probability distribution of its content and the true state\slash status of the process. In order for the factor 2) above to become crucial, it becomes important that the algorithm that is being deployed by the application uses the information contained in these packets efficiently. 

In view of the above discussion, the following question arises naturally: How do we prioritize packets for scheduling in order to optimize the performance of such real-time systems? Should the packets be prioritized according to their age, or the noisiness of the observation contained in them, or a combination of both? 
In this work, we provide a concrete answer to this question when the process of interest that is being monitored by an application is Gauss Markov, while the algorithm that is being used to generate an estimate of the process is a 
simple-to-implement linear filter
with a constant gain. We show that the optimal scheduler takes the following form: it attaches an index to each packet that is present in its queue, and then schedules the packet having the largest value of the index. This index depends upon the following two factors a) the age of the packet, b) the mutual information between the packet content, and the system status, or equivalently the variance of the noise associated with the measurement. 

\begin{figure}[!h]
	\centering
	\includegraphics[width=0.5\textwidth]{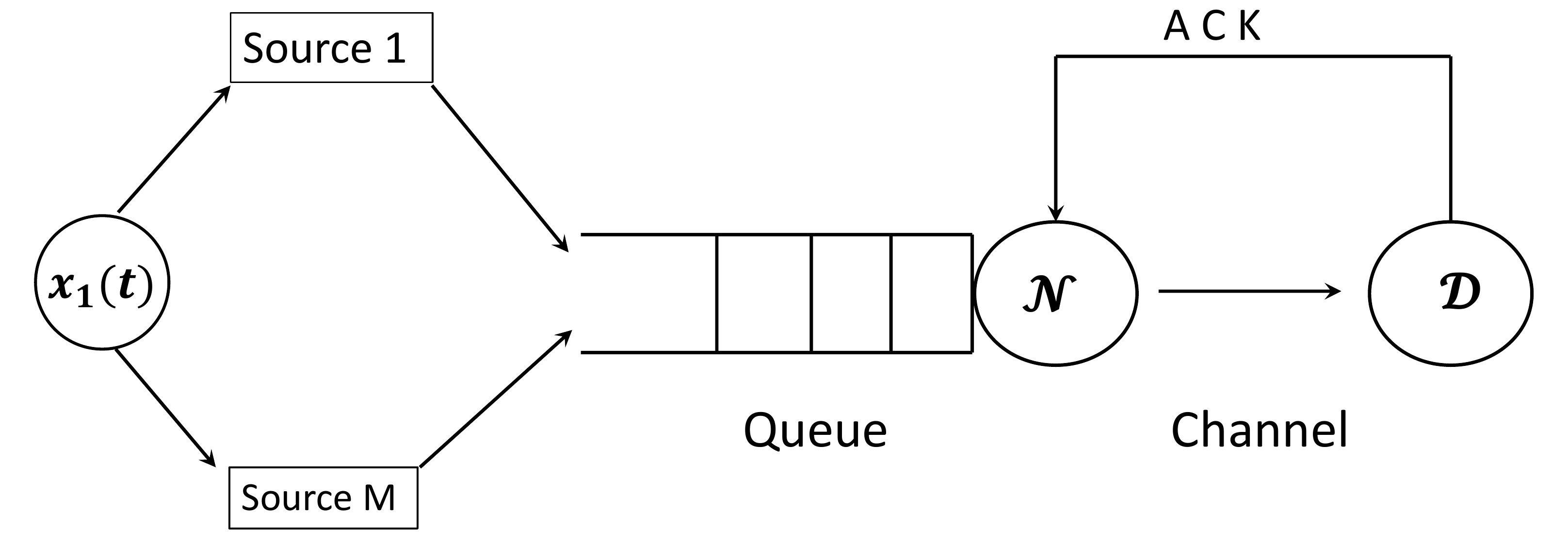}
	\caption{$N$ sources generate data packets which contain information about the process to the network node. The packets at the network node are queued in a single queue, and wait to be transmitted to the destination. The sources differ in the precision of their measurements of the process $x(t)$.
	}
	\label{fig:estimator}
\end{figure}

\section{Related Works}
Applications involving real-time systems such as the smart-grid, weather monitoring systems, etc., often require that the destination node, which comprises the end decision maker, has a constant availability of fresh \emph{staus updates} about the process of interest. The age-of-information, or simply age, performance metric was introduced in~\cite{kaul2012real,adelberg1995applying,Cho00synchronizinga,DBLP:conf/icde/GolabJS09}, and is an appropriate choice in order to encapsulate the notion of freshness of data. The age at a given time is equal to the time elapsed since the freshest data packet available to destination was generated. Existing works on age derive age-optimal policies for various kinds of networks, e.g., when multiple sources share a common queue~\cite{2012ISIT-YatesKaul}, broadcast networks~\cite{DBLP:conf/allerton/KadotaUSM16,DBLP:journals/ton/KadotaSUSM18}, under battery energy constraints on the packet transmitter~\cite{DBLP:conf/isit/Yates15}, and for wireless networks~\cite{DBLP:journals/iotj/JiangKZZN19,DBLP:conf/infocom/JiangZN019}. The work~\cite{DBLP:conf/isit/SunPU17} shows that age optimal policies minimize estimation error when the process of interest that is being monitored by the destination is a Weiner process. 

There are also other performance metrics relevant for real-time systems, which are 
also concerned with the timeliness of information delivery to the destination. In applications where the end-to-end packet delays are required to be small, the notion of ``timely throughput'', i.e., the throughput of packets that reach their destination before their deadlines, is useful. The works~\cite{cdcdelay,singh2018throughput} develop decentralized policies that maximize the timely throughput. Another useful metric is that of variance\slash variations in the packet interdelivery times. Minimizing this quantity ensures that the stream of packet deliveries as viewed by the end application looks close to the periodic delivery case when it regularly gets a packet every 1\slash throughput time-slots. See~\cite{guo2018risk,guosingh,singh2015index} for more details. Service smoothness or regularity is a useful concept in applications which require that the end application should not suffer from the problem that there are long durations in which they obtain no packet at all. An important example is video streaming which suffers from outages whenever the video buffer is emptied by non-delivery of packets for a long time~\cite{singh2015optimizing}. A scheduler that maintains service smoothness avoids such service starvation in cases when multiple applications share a common network resource~\cite{singh2015maxweight,rs,atilla1,atilla2}.

One may note that the performance metrics mentioned above try to connect either the freshness or the timeliness of information updates to the performance of the monitoring algorithm that is being deployed by the destination. However, the quality
of the measurements in all data packets need not be the same, that is to say that they differ in the amount of noise in the observations contained in them. Thus, the above performance metrics are unable to appropriately prioritize packet transmissions when the packets are generated by different sources that have varying precision. In this work, we address this problem for the case when the process of interest being monitored is a Gauss Markov process~\cite{kumar}, and the destination node employs a constant-gain linear filter to monitor the process. However, in this work, we do not consider more complicated signaling mechanisms such as the information conveyed by the non-arrival of packets. 

\section{Setup}
Throughout, we assume that time has been discretized into time-slots, and is numbered as $t=1,2,\ldots$. We assume that the system operates for $T$ time-slots, so that $t\in \left\{1,2,\ldots,T \right\}$. We denote the segment 
$\left\{\ell,\ell+1,\ldots,m\right\}$ of $m-\ell+1$ consecutive time-slots by $[\ell,m]$.

\emph{Process and Network Description}:
We consider the system shown in Fig.~\ref{fig:estimator}. There is a single network node $\mathcal{N}$ that receives packets containing information about the process $x(t),t\in [1,T]$ from $M$ sources. The process $x(t)\in\mathbb{R}$ is a scalar Gauss Markov process, and evolves as follows
\begin{align}\label{eq:single_system}
x(t+1) = ax(t) + w(t), t\in [1,T],
\end{align}
where $a$ describes its dynamics, while the noise $w(t)$ is a zero mean Gaussian process that is i.i.d. across time. There are $M$ sources that are equipped with sensors that take observations of the process $x(t)$. Each such source is characterized by the noisiness of its measurements, as described below. 

For a packet $\psi$, we let $t_{\psi}$ be the time at which the packet was generated at its source. Thus, the observation contained in $\psi$, denoted $y_{\psi}$, is given by,
\begin{align*}
y_{\psi} = x(t_\psi) +\sigma_{\psi} w_{\psi},
\end{align*}
where $w_{\psi}$ is a zero mean Gaussian random variable and $\sigma_{\psi}$ is the variance associated with observation noise corresponding to the source that generated $\psi$. Thus, the quantity $\sigma_{\psi}$ can assume one of the $M$ possible values. The observation noise $w_{\psi}$ is assumed to be independent across packets, and Gaussian with mean zero, variance $1$. We let $\tau_{\psi}(t)=t-t_\psi$ be the age of packet $\psi$  at time $t$. 

\emph{Network Node}: The $M$ sources provide their packets to a shared network node $\mathcal{N}$. $\mathcal{N}$ maintains a queue with infinite buffer size for storing the incoming data packets and is connected to the destination node $\mathcal{D}$ via an \emph{unreliable} link $\ell$. We denote the reliability of $\ell$ by $p$, the probability that an attempted packet transmission is successful. The link $\ell$ has a capacity of $1$ packet\slash time-slot, i.e., at most one packet can be transmitted on it in one time-slot. As shown in Fig.~\ref{fig:estimator}, we assume that upon a successful transmission, the receiver sends an acknolwedgement (ACK) to $\mathcal{N}$, so that the network node knows whether or not the packet transmission was successful \emph{after} having attempted the transmission at $t$. If the transmission is successful, it removes the delivered packet from its queue. We assume that the packet arrivals to $\mathcal{N}$ from the $M$ sources are governed by a well-defined stochastic process. We let $Q(t)$ denote the set of packets contained in the queue at time $t$.

\emph{Filtering}: The destination node has a linear filter~\cite{kumar} that produces real-time\slash online estimates $\hat{x}(t)$ of $x(t)$. Let $\psi(t)$ denote the packet that is delivered to the destination at time $t$.
Each such delivered packet is used for updating the estimate $\hat{x}(t)$, as follows:
\begin{small}
\begin{align}\label{eq:kalman_filter}
\hat{x}(t+1) =
\begin{cases}
a \hat{x}(t) + K \left(  a^{ \tau_{\psi(t)}(t) }y_{\psi(t)} - a \hat{x}(t) \right),  \mbox{ if } \psi(t) \neq \left\{\emptyset \right\}\\
a \hat{x}(t) \mbox{ if } \psi(t) = \left\{\emptyset \right\}, 
\end{cases}
\end{align}
\end{small}
\noindent where the scalar gain $K$ is a constant gain,
$\tau_{\psi(t)}(t)$ is the age of packet $\psi(t)$, $y_{\psi(t)}$ is the observation value contained in the packet $\psi(t)$, and we let $\left\{ \psi(t) = \left\{\emptyset \right\}  \right\}$ denote that no packet was delivered to $\mathcal{D}$ at $t$. 
Such an event can occur either because the queue $Q(t)$ was empty, or because the packet that was transmitted was dropped by the link $\ell$.
For simplicity, we use a constant gain $K$ for all time-slots $t$. The updates~\eqref{eq:kalman_filter} correspond to Kalman filtering with constant gain when the filter obtains delayed time-stamped packets, and it can utilize the values of delays in order to minimize the estimation error~\cite{DBLP:journals/automatica/ShiXM09}. See Theorem~3.1 and Fig.~2 of~\cite{DBLP:journals/automatica/ShiXM09} for more details.
Let 
\begin{align}\label{eq:error_def}
e(t):= x(t)-\hat{x}(t)
\end{align} 
denote the estimation error of the filter at time $t$.

We are interested in designing scheduling policies that can be implemented at the network node in order to minimize
the expected value of the cumulative squared estimation error over the time interval $[1,T]$, i.e., the quantity
\begin{align}\label{eq:error_cost}
\mathbb{E}\left(\sum_{t=1}^{T} e^2(t)\right),
\end{align}
where the expectation is taken with respect to the randomness of packet arrivals, observation and process noises $w_\psi,w(t),t\in[1,T]$ and the scheduling policy employed by $\mathcal{N}$. We denote a generic scheduling policy by $\pi$, and the packet chosen for transmission at $t$ by $u(t)$. A policy $\pi$ makes the decision $u(t)$ denoting which packet to transmit, on the basis of its past observations, i.e., $\left\{Q(s)\right\}_{s=1}^{t},\left\{u(s),x(s)\right\}_{s=1}^{t-1},$ where $Q(t)$ denotes the queue length at time $t$. 
At any time during the operation, the scheduler can access the age $\tau_{\psi}$, quality
$\sigma_{\psi}$, or observation $y_{\psi}$ of each packet $\psi$ present in its queue. Thus, the objective can be stated as follows,
\begin{align}\label{eq:cost}
\min_{\pi}J(\pi) : = \mathbb{E}_{\pi}\left(\sum_{t=1}^{T} e^2(t)\right).
\end{align}
\section{An Index Based Policy}
We begin by deriving some preliminary results concerning the estimation error process $e(t)$, and also introduce a simple-to-implement policy that is optimal when the scheduler has to make decision only at the last operating time $t=T-1$. We then use backward recursion argument to obtain the scheduler decision for all $t.$ 

We obtain the following expression for the evolution of the error $e(t)$ at the destination node.
Consider the linear
 filter~\eqref{eq:kalman_filter} that performs updates by utilizing the observation $y_{\psi(t)}$ contained in packet $\psi(t)$ that was delivered to it at time $t$. Let $a_c := a(1-K)$ denote the closed-loop estimator 
 gain of the linear filter. Then, its estimation error $e(t)= x(t)-\hat{x}(t)$ evolves as follows,
\begin{align}\label{eq:error_update_lemma}
e(t+1) = a(t)~e(t) + \left(w_{s,\psi(t)}(t) +w_{p,\psi(t) }(t)\right),
\end{align}
where 
\begin{align}
a(t) = 
\begin{cases}
a_c  \mbox{ if }~\psi(t) \neq \left\{ \emptyset  \right\}\notag\\
a  \mbox{ if }~\psi(t) = \left\{ \emptyset  \right\}.
\end{cases}
\end{align} 
Above, for a packet $\psi$, its ``sensing noise'' $w_{s,\psi}(t)$ and ``process noise'' $w_{p,\psi}(t)$ at time $t$ are defined as follows
\vspace*{-1mm}
\begin{small}
\begin{align}\label{def:noises_lemma}
w_{s,\psi}(t) :& = 
\begin{cases}
 a~K \left(a^{ \tau_{\psi}(t)  } w_s(t-\tau_{\psi}(t)) \right) \mbox{ if }~\psi(t) \ne \left\{ \emptyset  \right\}  \notag\\
 0 \mbox{ if }~\psi(t) = \left\{ \emptyset  \right\} 
\end{cases}
\\
w_{p,\psi }(t)  :& = 
\begin{cases}
-a~K\left(\sum_{l=1}^{\tau_{\psi}(t) } a^{l} w(t - l)\right) + w(t),\mbox{if }~\psi(t) \neq \left\{ \emptyset  \right\}  \notag\\
 w(t)\mbox{ if }~\psi(t) = \left\{ \emptyset  \right\}. 
\end{cases}
\end{align}
\end{small}
We note that the sensing noise depends upon the age of the packet as well as the precision of its source, while the process noise is a function of its age only. 

\emph{Scheduling Problem for $t=T-1$}: Now consider the case when $Q(T-1) \neq \left\{ \emptyset \right\}$, and the scheduler has to make a decision only during the last time-slot $t=T-1$. It follows from the update equation~\eqref{eq:error_update_lemma} that the cost incurred at time $t=T$ is given by 
\begin{align} 
\mathbb{E}~e^2(T) &=  \left(p a^{2}_c + (1-p) a^{2}\right) e^2(T-1) \\
& + p~\mathbb{E} \left(w_{s,\psi(t)}(t) +w_{p,\psi(t) }(t)\right)^2 +  (1-p)  \mathbb{E} w^2(t)\notag\\
&=  \left(p a^{2}_c + (1-p) a^{2}\right) e^2(T-1) + (1-p) \sigma^2\notag \\
&+ p \left( \mathbb{E}\left((w_{s,\psi(t)}(t)  \right)^2 + \mathbb{E}\left((w_{p,\psi(t)}(t)  \right)^2   \right) \notag\\
&=  \left(p a^{2}_c + (1-p) a^{2}\right) e^2(T-1) + (1-p) \sigma^2\notag \\
&+ p\left(  a^{ 2\tau_{\psi}(T)} \sigma^2_{s,\psi} + \sigma^2 \frac{a^{2\tau_{\psi}(T)} -1}{a^2 -1}\right) \notag \\
& + p(1+ K)^2 \sigma^2,
\end{align}
where the second equality follows since the process and sensing noise are independent. Thus, it is optimal to schedule the packet $\psi$ with the least value of $\mathbb{E}\left((w_{s,\psi}(T)  \right)^2 + \mathbb{E}\left((w_{p,\psi}(T)  \right)^2  $.

Thus, we define the following two quantities for a packet $\psi$, which correspond to the sensing and process noise variances associated with it at time $t$,
\begin{align}
W^{2}_{s,\psi}(t) : &=  a^2K^2\left(a^{ 2\tau_{\psi}(t)} \sigma^2_{s,\psi} \right), \notag\\
W^{2}_{p,\psi}(t) :&=  a^2K^2 \left(\sigma^2 \frac{a^{2\tau_{\psi}(t)} -1}{a^2 -1} \right)+ \sigma^2, \notag \\
W^{2}_{\psi}(t) :&= W^{2}_{s,\psi}(t) + W^{2}_{p,\psi}(t).\label{def:noises_variance_pkt}
\end{align}
For an empty packet, i.e., $\psi = \left\{\emptyset \right\}$, we let
\begin{align}
W^{2}_{s,\psi}(t) : &=  0, \notag\\
W^{2}_{p,\psi }(t)  : &= \sigma^2,\label{def:noises_variance_pkt_1}
\end{align}
for all $t \in [1,T]$.

It then follows 
that the policy which chooses the packet with the least value of the combined noise variance $W^{2}_{s,\psi}(T)+W^{2}_{p,\psi}(T)$, is optimal. Thus, we introduce the \emph{Index Policy}.
\begin{definition}(Index Policy)\label{def:index}
The Index Policy assigns an index of $W^2_{\psi}(t)$ to each packet $\psi$ in $Q(t)$, and transmits the packet with the least index. Thus, the scheduling decision $u(t)$ taken at time $t$ is
\begin{align}\label{eq:greedy_scheduler}
u(t) \in \arg\min \left\{ W^2_{\psi} (t+1)\big | \psi \in Q(t)       \right\}.
\end{align}
\end{definition}
\begin{lemma}\label{lemma:index_opt}
The Index Policy of Definition~\ref{def:index} is optimal at time $t=T-1$.
\end{lemma}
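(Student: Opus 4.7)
The plan is to exploit the terminal-stage structure: at time $T-1$ the realized errors $e(1),\ldots,e(T-1)$ are already fixed and do not depend on $u(T-1)$, so the problem of minimizing $\mathbb{E}\sum_{t=1}^{T} e^2(t)$ collapses to minimizing the one-step conditional expectation $\mathbb{E}[e^2(T)\mid \mathcal{F}_{T-1}]$ over the choice of packet to transmit, where $\mathcal{F}_{T-1}$ is the history available to the scheduler. This reduction is what makes the terminal time special and allows a purely myopic argument, rather than a full dynamic-programming argument, to suffice here.

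I would then plug the error recursion~\eqref{eq:error_update_lemma} into the square and condition on the channel outcome. With probability $p$ the chosen packet $\psi = u(T-1)$ is delivered and $e(T) = a_c\, e(T-1) + w_{s,\psi}(T-1) + w_{p,\psi}(T-1)$; with probability $1-p$ no packet is delivered and $e(T) = a\, e(T-1) + w(T-1)$. The sensing noise $w_{s,\psi}$, process noise $w_{p,\psi}$, and $e(T-1)$ are pairwise independent with zero mean, so squaring and taking expectations annihilates every cross-term. The resulting expression — which is essentially what is displayed in the paragraph preceding the lemma — splits into a term depending only on $e(T-1)$, a $(1-p)\sigma^2$ contribution from the failure branch, and a decision-dependent piece equal to $p\bigl(\mathbb{E}[w_{s,\psi}^2(T-1)] + \mathbb{E}[w_{p,\psi}^2(T-1)]\bigr)$.

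The final step is to identify this decision-dependent piece with $p\,W_\psi^2$ through the definitions~\eqref{def:noises_variance_pkt}--\eqref{def:noises_variance_pkt_1}, and to note that both $W_{s,\psi}^2$ and $W_{p,\psi}^2$ scale monotonically with age in a way that is common to every packet, so minimizing $W_\psi^2$ evaluated at the use-time is equivalent to minimizing the index $W_\psi^2(t+1)$ used in Definition~\ref{def:index}. Since $p>0$, the optimal scheduler at $t=T-1$ therefore picks $u(T-1)\in\arg\min_{\psi\in Q(T-1)} W_\psi^2(T)$, which is precisely the Index Policy. The only delicate point is the bookkeeping that shows all cross-terms vanish, which relies on the independence of $w_{s,\psi}$, $w_{p,\psi}$, and $e(T-1)$; this is immediate from the modeling assumptions (observation noise independent across packets, $w(t)$ i.i.d. Gaussian) and is already implicit in the one-step computation preceding the lemma statement, so I do not anticipate a genuine obstacle beyond careful expansion of the square.
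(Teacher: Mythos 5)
Your proposal is correct and follows essentially the same route as the paper: the paper's optimality argument for $t=T-1$ is exactly the one-step expansion of $\mathbb{E}\,e^2(T)$ displayed in the main text before the lemma (conditioning on the channel outcome, dropping cross-terms by independence, and isolating the decision-dependent term $p\bigl(\mathbb{E}[w_{s,\psi}^2]+\mathbb{E}[w_{p,\psi}^2]\bigr)=p\,W^2_\psi(T)$), while the appendix proof supplies the error recursion~\eqref{eq:error_update_lemma} that you also take as your starting point. Your explicit framing of the terminal-stage myopic reduction and your flagging of the cross-term cancellation are consistent with (and slightly more careful than) what the paper writes.
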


Next, we derive some properties of the indices $W^2_{\psi}(t)$ which will allow for efficient implementation of the Index Policy. 

\begin{definition}
Henceforth we let $Q(t)$ denote the \emph{ordered} set of packets in the queue $Q(t)$, with the ordering between two packets $\psi_1,\psi_2 \in Q(t)$ given as 
\begin{align*}
\psi_1 \leq_{Q(t)} \psi_2 \iff W^{2}_{\psi_1}(t)\leq W^{2}_{\psi_2}(t).
\end{align*} 
For two packets $\psi_a,\psi_b$ we let $\Delta(\psi_a,\psi_b,t)$ denote the difference in the values of indices of the two packets, i.e., 
\begin{align}\label{eq:delta_def}
\Delta(\psi_a,\psi_b,t) := W^2_{\psi_a}(t) - W^2_{\psi_b}(t).
\end{align}
\end{definition}

\begin{lemma}\label{lemma:index_order}
Consider two packets $\psi_1,\psi_2 \in Q(t_1)\cap Q(t_2)$, i.e., they are present with the scheduler in its queue at times $t_1$ as well as at time $t_2$. We then have that 
\begin{align*}
\psi_1 \leq_{Q(t_1) } \psi_2 \iff \psi_1 \leq_{Q(t_2) } \psi_2.
\end{align*}
Moreover, 
\begin{align}
 \Delta(\psi_1,\psi_2,t) = a^{2t} C(\psi_1,\psi_2),
\end{align} 
where $C(\psi_1,\psi_2)$ depends upon the packet generation times, and the sensor precisions, of the packets $\psi_1,\psi_2$.
\end{lemma}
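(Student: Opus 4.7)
The plan is to prove the claimed scaling identity by direct substitution, and then deduce the invariance of the ordering as an immediate corollary. Since both packets $\psi_1, \psi_2$ lie in the queue at the relevant times, neither is the empty placeholder, so I may work exclusively with the formulas in~\eqref{def:noises_variance_pkt}.

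First I will substitute $\tau_{\psi}(t) = t - t_{\psi}$ into the expressions for $W^2_{s,\psi}(t)$ and $W^2_{p,\psi}(t)$, obtaining
\begin{align*}
W^2_{s,\psi}(t) &= a^{2} K^{2}\, a^{2t}\, a^{-2 t_{\psi}}\, \sigma^{2}_{s,\psi}, \\
W^2_{p,\psi}(t) &= a^{2} K^{2}\, \sigma^{2}\, \frac{a^{2t}\, a^{-2 t_{\psi}} - 1}{a^{2} - 1} + \sigma^{2}.
\end{align*}
The key structural observation is that $W^2_{\psi}(t)$ decomposes as $a^{2t}\, F(\psi) + G$, where $F(\psi)$ depends only on the generation time $t_{\psi}$ and source precision $\sigma^2_{s,\psi}$, while $G$ is a constant independent of both $t$ and $\psi$. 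Explicitly,
\[
F(\psi) = a^{2} K^{2} a^{-2 t_{\psi}} \sigma^{2}_{s,\psi} + \frac{a^{2} K^{2} \sigma^{2}\, a^{-2 t_{\psi}}}{a^{2} - 1}, \qquad G = \sigma^{2} - \frac{a^{2} K^{2} \sigma^{2}}{a^{2} - 1}.
\]

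Forming $\Delta(\psi_1,\psi_2,t) = W^{2}_{\psi_1}(t) - W^{2}_{\psi_2}(t)$, the constant $G$ cancels and I can factor $a^{2t}$ out of what remains, giving $\Delta(\psi_1,\psi_2,t) = a^{2t}\, C(\psi_1,\psi_2)$ with
\[
C(\psi_1,\psi_2) = F(\psi_1) - F(\psi_2),
\]
which visibly depends only on the two generation times and the two source precisions. This is the second claim of the lemma.

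For the first claim, since $a^{2t} > 0$ for every $t$, the sign of $\Delta(\psi_1,\psi_2,t)$ coincides with the sign of $C(\psi_1,\psi_2)$, which is time-independent. In particular, $W^{2}_{\psi_1}(t_1) \leq W^{2}_{\psi_2}(t_1)$ if and only if $C(\psi_1,\psi_2) \leq 0$ if and only if $W^{2}_{\psi_1}(t_2) \leq W^{2}_{\psi_2}(t_2)$, which is exactly the ordering equivalence $\psi_1 \leq_{Q(t_1)} \psi_2 \iff \psi_1 \leq_{Q(t_2)} \psi_2$. There is no real obstacle here; the proof is essentially a routine algebraic manipulation, and the only thing to be careful about is keeping the $t$-dependent and $t$-independent pieces cleanly separated so that the $a^{2t}$ factor emerges and the constant offset $G$ cancels in the difference.
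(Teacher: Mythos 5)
Your proof is correct and follows essentially the same route as the paper: both substitute $\tau_\psi(t)=t-t_\psi$, split $W^2_\psi(t)$ into a part of the form $a^{2t}F(\psi)$ plus a constant offset that cancels in the difference, and conclude that the sign of $\Delta$ is governed by the time-independent factor. Your explicit $F$/$G$ decomposition is just a slightly tidier packaging of the paper's rearrangement.
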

\begin{proof}
To show that the ordering of packets doesn't change with time, it suffices to prove that the following holds for times $t_1,t_2$, 
\begin{align*}
W^{2}_{\psi_1}(t_1)\leq W^{2}_{\psi_2}(t_1)  \iff W^{2}_{\psi_1}(t_2)\leq W^{2}_{\psi_2}(t_2).
\end{align*} 
The index of a packet at time $t$ can be derived as a function of its generation time $t_{\psi}$ as follows,
\begin{align*}
W^{2}_{\psi}(t) &=  a^2K^2   \left(a^{ 2( t- t_{\psi} ) } \sigma^2_{s,\psi}+  \sigma^2 \frac{a^{2 (t - t_{\psi})} -1}{a^2 -1}\right) +  \sigma^2\\
&= a^2K^2\left( a^{ 2( t- t_{\psi} ) } \left(\sigma^2_{s,\psi}+  \sigma^2 \frac{1}{a^2 -1} \right) - \sigma^2 \frac{1}{a^2 -1} \right) \\
&+ \sigma^2.
\end{align*}  
Thus, in order to compare the indices of $\psi_1,\psi_2$ at time $t$, we need to solve the following inequality
\begin{align}
&a^{ 2( t- t_{\psi_1} ) } \left(\sigma^2_{s,\psi_1}+  \sigma^2 \frac{1}{a^2 -1} \right) \notag \\
& \leq a^{ 2( t- t_{\psi_2} ) } \left(\sigma^2_{s,\psi_2}+  \sigma^2 \frac{1}{a^2 -1} \right),\notag\\
\equiv & a^{ 2(- t_{\psi_1} ) } \left(\sigma^2_{s,\psi_1}+  \sigma^2 \frac{1}{a^2 -1} \right) \notag \\
& \leq a^{ 2(- t_{\psi_2} ) } \left(\sigma^2_{s,\psi_2}+  \sigma^2 \frac{1}{a^2 -1} \right).\label{ineq:1}
\end{align}
Since the terms in the above inequality do not depend upon time $t$, it then follows that either $\psi_1 \geq_{Q(t)} \psi_2 \forall t$ or vice--versa. This proves the first claim. The second claim follows from the inequality~\eqref{ineq:1} after letting $C(\psi_1,\psi_2)$ to be equal to the difference between the terms on the l.h.s. and r.h.s. 
\end{proof}

It follows from Lemma~\ref{lemma:index_order} that the Index Policy can be implemented as follows. The network node maintains a single ordered queue $Q(t)$ in which it stores packets from \emph{all} the sources, i.e., the queue is \emph{shared} amongst all the sources. Upon receiving a  new packet, it calculates its index, and then enqueues it in a location so that the resulting queue is ordered. Such an operation requires $O(\log|Q(t)|)$ computation if the queue is implemented using a binary search tree. Moreover, it does not require the scheduler to compute the indices of all the packets in $Q(t)$. 

Moreover, it follows from Lemma~\ref{lemma:index_order} that since the order of packets does not change with $t$, the Index Policy can base its decisions on the values $W^{2}_{\psi}(t)$ instead of $W^{2}_{\psi}(t+1)$. Hence, the Index Policy is equivalently given by 
\begin{align}\label{eq:greedy_scheduler_1}
u(t) \in \arg\min \left\{ W^2_{\psi} (t)\big | \psi \in Q(t)       \right\}.
\end{align}
In later discussions, we will switch between the two definitions depending on which is more convenient.
\section{Optimality of Index Policy}
The Index Policy of the previous section is very simple to implement and is optimal when the scheduler has to make decisions for only the last time--slot $t=T-1$. Surprisingly, as will be shown in this section, it turns out that the policy continues to be optimal when implemented over the interval $[1,T]$. 
 
In order that the problem is tractable, we make the following assumption about the parameter $a$ of the process $x(t)$, and the estimator gain $K$ of the filter~\eqref{eq:kalman_filter}.
\begin{assumption}\label{assum:1}
The quantities $a,a_c,K$ satisfy
\begin{align*}
|a|,|a_c| < 1,
\end{align*}
i.e., the process that is being monitored, as well as the filter, are stable. Additionally, we require the 
estimator gain $K$ to be bounded as follows,
\begin{align*}
K\le \frac{1-a^2}{a^2}.
\end{align*}
\end{assumption}  
 
Denote the index policy by $\pi_{idx}$. Let $\tilde{\pi}$ be a policy that uses a different decision rule for time $t=1$, and then employs $\pi_{idx}$ for the remaining time-slots $t\in \left[2,T-1\right]$. Our approach to proving the optimality of $\pi_{idx}$ will be to show that the cost incurred by $\tilde{\pi}$ is more than that of $\pi_{idx}$.

Let $\psi_1$ denote a 
the packet with the least index in $Q(1)$, and let $\psi_2$ be a packet satisfying $\psi_2 >_{Q(1)} \psi_1$. At time $t=1$, $\pi_{idx}$ serves $\psi_1$ while $\tilde{\pi}$ serves $\psi_2$. With a probability equal to $1-p$, the transmission at time $t=1$ is unsuccesful, so that the queues at time $t=2$ are the same for both the policies. Moreover since $\pi_{idx}$ and $\tilde{\pi}$ agree on decisions at times $t\in [2,T-1]$, this implies that 
with probability $1-p$ their cumulative costs in the interval $[1,T]$ are equal. Hence, without loss of generality, we only consider the case when the first packet transmission is successful. 

Define the following stopping times:
 \begin{align}\label{eq:stop}
T_{1} :&= \left\{ t: \psi(t;\pi_{idx}) = \psi_2 \right\},\notag\\
T_2 :&= \left\{ t: \psi(t;\tilde{\pi}) = \psi_1 \right\},
\end{align}
where the notation $\psi(\cdot;\pi )$ explicitly shows the dependence of the scheduling decisions on the policy $\pi$. Thus, $T_1$ is the time when $\pi_{idx}$ delivers $\psi_2$ to the destination.

The following result is easily derived, and is proven in the Appendix.
\begin{lemma}\label{lemma:compare_times}
Consider the stopping times $T_1,T_2$ as defined in~\eqref{eq:stop}. We have
\begin{align*}
T_2 \leq T_1.
\end{align*}
\end{lemma}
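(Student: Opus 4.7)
My plan is to prove the lemma by a sample-path coupling argument between the two policies, using the order-preservation property given by Lemma~\ref{lemma:index_order}. Couple $\pi_{idx}$ and $\tilde{\pi}$ so they see the same packet arrival process and the same channel-success realizations $\{X_t\}_{t\ge 1}$. Under the conditioning that $X_1=1$ (the case we already reduced to), after the $t=1$ transmissions the two queues differ only by a single swap: $\tilde{\pi}$'s queue contains $\psi_1$ where $\pi_{idx}$'s queue contains $\psi_2$, and all other packets coincide.

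The core step is the following invariant, proved by induction on $t\ge 2$: as long as $\tilde{\pi}$ has not yet delivered $\psi_1$ and $\pi_{idx}$ has not yet delivered $\psi_2$, the two queues still differ by exactly that $\psi_1\!\leftrightarrow\!\psi_2$ swap. To verify the inductive step I would use Lemma~\ref{lemma:index_order}: the indices of $\psi_1$ and $\psi_2$ maintain the strict order $W^2_{\psi_1}(t)<W^2_{\psi_2}(t)$ for all $t$, and the ordering of the common (shared) packets is the same in both queues. Let $\phi(t)$ denote the packet of smallest index among the shared packets present in both queues. Three cases arise:
\begin{enumerate}
\item $W^2_{\phi(t)}(t)<W^2_{\psi_1}(t)$: both policies schedule $\phi(t)$; on success both queues lose $\phi(t)$, and the swap invariant is preserved;
\item $W^2_{\psi_1}(t)<W^2_{\phi(t)}(t)<W^2_{\psi_2}(t)$: $\tilde{\pi}$ schedules $\psi_1$ while $\pi_{idx}$ schedules $\phi(t)$; on success $\tilde{\pi}$ delivers $\psi_1$ at this $t$, but $\pi_{idx}$ does not yet deliver $\psi_2$;
\item $W^2_{\psi_2}(t)<W^2_{\phi(t)}(t)$ (or no shared packet with smaller index than $\psi_2$): $\tilde{\pi}$ schedules $\psi_1$ and $\pi_{idx}$ schedules $\psi_2$; on success both are delivered simultaneously.
\end{enumerate}
On a failed transmission all queues are unchanged, so the invariant is trivially preserved. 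In every case in which either of the target packets is delivered on the current step, we see that $\psi_1$ is delivered by $\tilde{\pi}$ no later than $\psi_2$ is delivered by $\pi_{idx}$; and no path exists on which $\pi_{idx}$ delivers $\psi_2$ strictly before $\tilde{\pi}$ delivers $\psi_1$, because case 3 forces simultaneity and case 2 delivers $\psi_1$ first.

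Combining these cases yields $T_2\le T_1$ on every sample path, which is the claimed inequality. The one place to be careful is the well-definedness of ``smallest shared packet'' as new arrivals occur: here the time-invariance of the index ordering from Lemma~\ref{lemma:index_order} is essential, since it guarantees that the comparisons among $\psi_1$, $\psi_2$, and every packet that has ever entered the queue (whenever they are simultaneously present) never flip, so the case splitting above is globally consistent and we never have to re-sort the queues inconsistently between the two coupled systems. The main (and only) subtle point is making this bookkeeping rigorous; once the invariant is set up, the conclusion is immediate.
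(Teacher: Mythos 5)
Your proposal is correct and follows essentially the same route as the paper: both rest on the time-invariance of the index ordering from Lemma~\ref{lemma:index_order} together with the fact that, from $t=2$ onward, both policies run least-index-first on queues that differ only by the $\psi_1\!\leftrightarrow\!\psi_2$ swap, so $\tilde{\pi}$ must reach $\psi_1$ no later than $\pi_{idx}$ reaches $\psi_2$. The paper states this in three sentences; your coupling invariant and three-case induction is simply a rigorous write-up of the same argument.
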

Now define $\pi_{idx} \left[T_2, T_{1} -1  \right] $ be the \emph{ordered} set of packets that were delivered under $\pi_{idx}$ in the time interval $\left[T_2, T_{1} -1 \right]$. Similarly consider the set $\tilde{\pi}\left[T_{2} +1 , T_{1}  \right] $. We will now show that these sets are equal.
\begin{lemma}\label{lemma:ordered_sets_eq}
\begin{align*}
\pi_{idx} \left[T_2, T_{1} -1  \right] = \tilde{\pi}\left[T_{2} +1 , T_{1}  \right].
\end{align*}
\end{lemma}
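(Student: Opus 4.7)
The plan is a pathwise coupling argument. Fix a sample path in which both $\pi_{idx}$ and $\tilde{\pi}$ see identical arrivals and identical channel success/failure outcomes at every slot. I maintain a \emph{swap invariant}: on $[2,T_1]$ the two queues coincide except that $\tilde{\pi}$ holds a ``marker'' packet $X_t\in Q^{\tilde{\pi}}(t)\setminus Q^{idx}(t)$ while $\pi_{idx}$ holds $\psi_2\in Q^{idx}(t)\setminus Q^{\tilde{\pi}}(t)$, with $X_t$ strictly below $\psi_2$ in the time-invariant index ordering guaranteed by Lemma~\ref{lemma:index_order}. Initially $X_2=\psi_1$.

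The invariant evolves as follows. On $[2,T_2-1]$: at any slot where $\psi_1$ is the smallest index in $Q^{\tilde{\pi}}$ the channel must fail (else $T_2$ would occur earlier), so neither policy delivers; at every other slot both policies attempt the same smallest common packet and the coupled channel gives identical outcomes. Hence deliveries on the two sides coincide throughout and the swap is preserved. At $t=T_2$, $\tilde{\pi}$ successfully delivers $\psi_1$ while $\pi_{idx}$ simultaneously delivers the next-smallest common packet $\phi'_0$; Lemma~\ref{lemma:compare_times} (which gives $T_2<T_1$ in the nontrivial case) forces $\phi'_0\ne\psi_2$, and the marker refreshes to $X=\phi'_0$. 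On $(T_2,T_1)$ the same pattern iterates: either both policies attempt and deliver the same common packet (swap unchanged), or $\tilde{\pi}$ attempts the current marker $\phi'_{i-1}$ while $\pi_{idx}$ attempts some other common packet $\phi'_i$ whose index lies below $\psi_2$; a successful slot of the latter kind refreshes the marker to $\phi'_i$ (now common to both queues). Finally at $t=T_1$, $\psi_2$ is the smallest index in $Q^{idx}$, and by the swap the current marker $\phi'_k$ is smallest in $Q^{\tilde{\pi}}$; the coupled success removes $\psi_2$ from $\pi_{idx}$'s queue and $\phi'_k$ from $\tilde{\pi}$'s queue, leaving $Q^{idx}(T_1+1)=Q^{\tilde{\pi}}(T_1+1)$.

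Starting from the same universe $Q(1)$ under the coupled arrival stream, the two policies have now reached identical queues at time $T_1+1$, so the multisets of packets they delivered in $[1,T_1]$ coincide. Cancelling the identical deliveries on $[2,T_2-1]$ together with the matched marker packets $\psi_1$ (delivered by $\pi_{idx}$ at $t=1$ and by $\tilde{\pi}$ at $t=T_2$) and $\psi_2$ (delivered by $\tilde{\pi}$ at $t=1$ and by $\pi_{idx}$ at $t=T_1$) yields $\pi_{idx}[T_2,T_1-1]=\tilde{\pi}[T_2+1,T_1]$ as multisets, and by the time-invariance in Lemma~\ref{lemma:index_order} the natural index ordering gives both sides the same ordered-set structure. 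The main obstacle is the marker case analysis together with the use of Lemma~\ref{lemma:compare_times} to rule out the symmetric scenario that would let $X_t>\psi_2$ and break the invariant.
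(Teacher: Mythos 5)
Your proof is correct and takes essentially the same route as the paper's: both compare the delivery sequences of the two least-index-first policies on a coupled sample path, using the time-invariant ordering from Lemma~\ref{lemma:index_order} and $T_2\le T_1$ to match up the common prefix on $[2,T_2-1]$ and the swapped deliveries of $\psi_1$ and $\psi_2$. Your explicit swap/marker invariant simply fills in the step the paper asserts without detail, namely that $\pi_{idx}\left[1,T_1\right]$ and $\tilde{\pi}\left[1,T_1\right]$ consist of the same elements.
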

 \begin{proof}(Proof of Lemma~\ref{lemma:ordered_sets_eq})
 
Since the two policies are the same for times $t>1$, and since both of them have the same set of packets at each time, except possibly packet $\psi_1$ or $\psi_2$, we have
\begin{align}\label{seq:2}
\pi_{idx} \left[2, T_{2} -1  \right] = \tilde{\pi}\left[2 , T_{2} -1 \right].
\end{align} 
It then follows from~\eqref{seq:2} that  
\begin{align}
\pi_{idx} \left[1, T_{1}  \right] &= \psi_1~ \pi_{idx} \left[2, T_{2} -1 \right]~\pi_{idx} \left[T_{2},T_{1} -1  \right]~\psi_2\label{seq:1}\\
\tilde{\pi} \left[1, T_{1}  \right] &= \psi_2~ \pi_{idx} \left[2, T_{2} -1 \right]~\psi_1 ~ \tilde{\pi} \left[T_{2}+1,T_{1}  \right]. \label{seq:3}
\end{align}
Since the sets $ \pi_{idx} \left[1, T_{1}  \right], \tilde{\pi}\left[1 , T_{1}  \right]$ consist of the same elements, it follows from~\eqref{seq:2} that the sets $\pi_{idx} \left[T_{2},T_{1} -1  \right]$ and $\tilde{\pi} \left[T_{2}+1,T_{1}  \right]$ consist of the same elements. Moreover, since under either of the two policies, the packets in both these sets are served in the same order, this shows that 
\begin{align}\label{seq:5}
\pi_{idx} \left[T_{2},T_{1} -1  \right] =\tilde{\pi} \left[T_{2}+1,T_{1}  \right].
\end{align}
 \end{proof} 
Throughout, we let 
\begin{align}
\alpha_{s,t}:= \prod_{m=s}^{t} a^2(t).
\end{align} 
 
\begin{theorem}\label{th:index_opt}
Consider the problem of scheduling data packets in order to minimize the estimation error $\mathbb{E}\left( \sum_{t=1}^T  e^2(t)\right)$. Let $\tilde{\pi}$ be a policy that differs from $\pi_{idx}$ on scheduling decision for time $t=1$, and then employs $\pi_{idx}$ for the remaining time-slots $t\in \left[2,T-1\right]$. Then we have that 
\begin{align*}
\mathbb{E}_{\tilde{\pi}}\left( \sum_{t=1}^T  e^2(t)\right) \geq \mathbb{E}_{\pi_{idx}}\left( \sum_{t=1}^T  e^2(t)\right).
\end{align*}
\end{theorem}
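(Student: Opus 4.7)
The plan is a pathwise interchange argument. I would first condition on the event that the transmission at $t=1$ succeeds; on the complementary event (probability $1-p$) the queues at $t=2$ agree under $\pi_{idx}$ and $\tilde\pi$, both policies apply $\pi_{idx}$ thereafter, and the two expected costs coincide. On the success event, I would couple the two sample paths so that each packet $\psi$ carries the same realization of its sensing noise $w_s(t_\psi)$, the process noise $\{w(t)\}$ is common, and the link's Bernoulli outcomes at every time match. Under this coupling, Lemma~\ref{lemma:ordered_sets_eq} ensures that the same queues are presented to both policies at every time in $[1,T_1]$ up to the single discrepancy $\{\psi_1\}$ vs.\ $\{\psi_2\}$, and the policies agree exactly for $t>T_1$.

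Unrolling the linear recursion \eqref{eq:error_update_lemma} and using independence of the injected noises to eliminate cross terms gives
\begin{align*}
\sum_{t=1}^{T}\mathbb{E}[e^2(t)] \;=\; \Phi_0 \;+\; \sum_{s=1}^{T-1}\omega(s)\,W^2_{\psi(s)}(s),
\end{align*}
where $\Phi_0$ depends only on $\mathbb{E}[e^2(1)]$ and $\omega(s):=\sum_{t=s+1}^{T}\bigl(\prod_{m=s+1}^{t-1}a(m)\bigr)^{2}\ge 0$. By the coupling the multiplier sequence $\{a(s)\}$ is common to both policies, so $\omega(s)$ is common, and the cost difference localises to $s\in\{1\}\cup[T_2,T_1]$.

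Next I would use the closed form from the proof of Lemma~\ref{lemma:index_order} to write $W^2_\psi(t)=a^{2t}D_\psi+\mathrm{const}$ with a packet-dependent constant $D_\psi$; the ordering statement of Lemma~\ref{lemma:index_order} then gives $D_{\psi_1}\le D_{\phi_i}\le D_{\psi_2}$ for every packet $\phi_i$ in the intermediate set $S':=\pi_{idx}[T_2,T_1-1]$. Collecting like terms by packet, the cost difference becomes a linear form
\begin{align*}
J(\tilde\pi)-J(\pi_{idx}) \;=\; c_{\psi_1}D_{\psi_1}+c_{\psi_2}D_{\psi_2}+\sum_i c_{\phi_i}D_{\phi_i},
\end{align*}
with coefficients $c_{\psi_1}=g(T_2)-g(1)$, $c_{\psi_2}=g(1)-g(T_1)$, $c_{\phi_i}=g(T_2+i)-g(T_2+i-1)$, where $g(s):=a^{2s}\omega(s)$; these coefficients telescope to $\sum c=0$. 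Writing $D_{\phi_i}=\lambda_i D_{\psi_1}+(1-\lambda_i)D_{\psi_2}$ with $\lambda_i\in[0,1]$ and using the zero-sum factors the cost difference as $\bigl[c_{\psi_2}+\sum_i c_{\phi_i}(1-\lambda_i)\bigr](D_{\psi_2}-D_{\psi_1})$, with $D_{\psi_2}-D_{\psi_1}\ge 0$ by the choice of $\psi_1$.

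The main obstacle is to show that the scalar factor $c_{\psi_2}+\sum_i c_{\phi_i}(1-\lambda_i)$ is non-negative, which I would reduce to the monotonicity statement that $g(s)=a^{2s}\omega(s)$ is non-increasing in $s\in[1,T_1]$. Given this monotonicity, $c_{\psi_2}\ge 0\ge c_{\phi_i}$, and then $c_{\psi_2}\ge g(T_2)-g(T_1)=-\sum_i c_{\phi_i}\ge-\sum_i c_{\phi_i}(1-\lambda_i)$, which yields the claim. Establishing monotonicity of $g$ is exactly where Assumption~\ref{assum:1} enters: the stability conditions $|a|,|a_c|<1$ furnish geometric contraction of $\omega(s)$ across each time slot, while the bound $K\le (1-a^2)/a^2$ is the quantitative condition ensuring that this contraction dominates the $a^{2s}$ prefactor in $g$ at every step. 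Combining these ingredients yields $J(\tilde\pi)\ge J(\pi_{idx})$.
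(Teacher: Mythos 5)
Your proposal is correct in substance but organizes the exchange argument differently from the paper. The paper proceeds through a chain of intermediate policies: it first constructs $\tilde{\pi}_{idx}$, which swaps the service positions of $\psi_1$ and $\psi_2$ relative to $\tilde{\pi}$ while leaving the intermediate packets in place, and bounds the resulting cost change using $\Delta(\psi_2,\psi_1,T_2)=(a^2)^{T_2}\Delta(\psi_2,\psi_1,1)$ from Lemma~\ref{lemma:index_order} together with the positivity statement of Lemma~\ref{lemma:negative_delta}; it then pushes the service of $\psi_2$ back one delivery slot at a time via the policies $\tilde{\pi}_{idx,k}$, and finishes with a mixing argument over the random number $k=|\pi_{idx}[T_2+1,T_1-1]|$ of intermediate packets. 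You instead perform a single global rearrangement: the cost is a linear form $\sum_s g(s)D_{\psi(s)}$ whose policy-to-policy difference has telescoping (zero-sum) coefficients, the sandwich $D_{\psi_1}\le D_{\phi_i}\le D_{\psi_2}$ lets you factor out $(D_{\psi_2}-D_{\psi_1})\ge 0$, and everything reduces to monotonicity of $g(s)=a^{2s}\omega(s)$. Your route buys a cleaner pathwise inequality that handles all the interchanges at once and avoids the paper's final probabilistic mixing step; the paper's route isolates each pairwise interchange and needs only the single positivity fact of Lemma~\ref{lemma:negative_delta}. The two are equivalent in analytic content: your per-step requirement $g(s)\ge g(s+1)$ unwinds to $a^{2s}\bigl(1-(a^2-a(s+1)^2)\,\omega(s+1)\bigr)\ge 0$, and since $\omega(s+1)\le 1/(1-a^2)$ this is exactly the sufficient condition $a^2-a_c^2\le 1-a^2$ that Lemma~\ref{lemma:negative_delta} extracts from Assumption~\ref{assum:1}. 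The one substantive omission is that you assert rather than verify this monotonicity; it is the analytical heart of the theorem and is precisely what Lemma~\ref{lemma:negative_delta} supplies, so your argument is complete only once that computation is carried out. Two minor bookkeeping points: the coefficients $c_{\phi_i}$ should be indexed by the actual (random) successful-delivery times rather than by $T_2+i$, and the interpolation $D_{\phi_i}=\lambda_iD_{\psi_1}+(1-\lambda_i)D_{\psi_2}$ needs the degenerate case $D_{\psi_1}=D_{\psi_2}$ treated separately (the difference is then zero).
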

\begin{proof}
Consider a policy $\tilde{\pi}_{idx}$ that follows $\pi_{idx}$ except that it serves $\psi_2$ at times when $\tilde{\pi}$ serves $\psi_1$. We will show that $\tilde{\pi}_{idx}$ attains a lower cost than $\tilde{\pi}$. The \emph{ordered} sets of packets served by them until the time $T_1$ are given by,
\begin{align}
\pi_{idx} \left[1, T_{1}  \right] &= \psi_1~ \pi_{idx} \left[2, T_{2} -1 \right]~\pi_{idx} \left[T_{2},T_{1} -1  \right]~\psi_2\label{seq:idx}\\
\tilde{\pi}_{idx} \left[1, T_{1}  \right] &= \psi_1~ \pi_{idx} \left[2, T_{2} -1 \right]~\psi_2~\pi_{idx} \left[T_{2}+1,T_{1} -1  \right].\label{seq:tilde_idx}\\
\tilde{\pi} \left[1, T_{1}  \right] &= \psi_2~ \pi_{idx} \left[2, T_{2} -1 \right]~\psi_1 ~ \pi_{idx} \left[T_{2}+1,T_{1} -1  \right]. 
\end{align}
We note that since all the policies are constructed on the same probability space, and since they are non-idling, the $a(t)$ are the same under all policies. The performance difference between $\tilde{\pi}_{idx},\tilde{\pi}$ is given by
\begin{small}
\begin{align}
&\left(\sum_{t=1}^{T} e^2(t) \right)_{\tilde{\pi}_{idx}}-  \left(\sum_{t=1}^{T} e^2(t) \right)_{\tilde{\pi}} \notag \\
 & =  \Delta(\psi_1,\psi_2,1)  \sum_{t\in [1,T_2 -1]}\alpha_{1,t} + \big(\Delta(\psi_2,\psi_1,T_2) \notag \\
 & + \Delta(\psi_1,\psi_2,1)\alpha_{1,T_2} \big) \sum_{t\geq T_2}\alpha_{T_2,t}\notag\\
&=  \Delta(\psi_1,\psi_2,1)  \sum_{t\in [1,T_2 -1]}\alpha_{1,t} + \big( (a^2)^{T_2}\Delta(\psi_2,\psi_1,1) \notag \\
& + \Delta(\psi_1,\psi_2,1)\alpha_{1,T_2} \big) \sum_{t\geq T_2}\alpha_{T_2,t}\notag\\
&=  \Delta(\psi_1,\psi_2,1) \left( \sum_{t\in [1,T_2 -1]}\alpha_{1,t} - \left( (a^2)^{T_2} -  \alpha_{1,T_2} \right) \sum_{t\geq T_2}\alpha_{T_2,t} \right)\notag\\
&\leq 0,\label{ineq:pol_1}
\end{align} 
\end{small}
\noindent where, in the first equality we used Lemma~\ref{lemma:index_order} in order to deduce $\Delta(\psi_2,\psi_1,1) = (a^2)^{T_2}\Delta(\psi_2,\psi_1,1)$. To show the last inequality, we prove in Lemma~\ref{lemma:negative_delta} that the term within the braces is positive, and also note that $\Delta(\psi_1,\psi_2,1)<0$.

We will now show that $\pi_{idx}$ has a lower cost than $\tilde{\pi}_{idx}$. Towards this end, construct a class of feasible policies $\tilde{\pi}_{idx,k},k\geq 0$. Denote $\tilde{\pi}_{idx}$ by $\tilde{\pi}_{idx,0}$. $\tilde{\pi}_{idx,k}$ is obtained from $\tilde{\pi}_{idx,k-1}$ as follows. $\tilde{\pi}_{idx,k}$ does not serve $\psi_2$ until $\tilde{\pi}_{idx,k-1}$ has delivered $\psi_2$. However, in case $\pi_{idx}$ attempts $\psi_2$, then $\tilde{\pi}_{idx,k}$ also attempts it. 
Using the same techniques that were used for proving the inequality~\eqref{ineq:pol_1}, it can be shown that the costs satisfy 
\begin{align*}
\left(\sum_{t=1}^{T} e^2(t) \right)_{\tilde{\pi}_{idx,k}}\leq  \left(\sum_{t=1}^{T} e^2(t) \right)_{\tilde{\pi}_{idx,k-1} }, k\in \mathbb{Z}^{+}. 
\end{align*} 
Since it was shown in~\eqref{ineq:pol_1} that $\tilde{\pi}_{idx}$, equivalently $\tilde{\pi}_{idx,0}$, has a lower cost than $\tilde{\pi}$, this shows that
\begin{align*}
\left(\sum_{t=1}^{T} e^2(t) \right)_{\tilde{\pi}_{idx,k} } \leq \left(\sum_{t=1}^{T} e^2(t) \right)_{\tilde{\pi}_{idx} } ,~k\in \mathbb{Z}^{+}. 
\end{align*}
Now, observe that the cost incurred by $\pi_{idx}$ is equal to that incurred by $\tilde{\pi}_{idx,k}$ with a probability equal to 
\begin{align*}
\mathbb{P}\left(  |\pi_{idx} \left[T_{2}+1,T_{1} -1  \right]|=k | \mathcal{F}_{T_2} \right).
\end{align*} 
Since the costs of $\tilde{\pi}_{idx,k}$ are lower than that of $\tilde{\pi}$, we then have that
\begin{align*}
\left(\sum_{t=1}^{T} e^2(t) \right)_{\pi_{idx} } \leq \left(\sum_{t=1}^{T} e^2(t) \right)_{\tilde{\pi}}.
\end{align*}
This completes the proof.
\end{proof} 
Consequently, we are now in a position to state the optimality of the Index Policy.
\begin{corollary}
The Index Policy that serves according to 
\begin{align}
u(t) \in \arg\min \left\{ W^2_{\psi}(t) \big | \psi \in Q(t)       \right\},
\end{align}
minimizes the estimation error $\mathbb{E}\left( \sum_{t=1}^T  e^2(t)\right)$ under the Assumption~\ref{assum:1}. The optimal cost is a function of the starting state $e(0),Q(0)$, the time-duration that the system operates, and the channel reliability. Henceforth we denote this optimal cost by $J^{\star}(e(0);p;T)$.
\end{corollary}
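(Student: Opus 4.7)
The plan is to lift the one-step deviation inequality of Theorem~\ref{th:index_opt} into full optimality over $[1,T]$ by invoking Bellman's principle of optimality and iterating an interchange argument.

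First, I would recast the setup as a finite-horizon Markov decision process whose state at stage $t$ is the pair $(e(t-1),Q(t))$: the error recursion~\eqref{eq:error_update_lemma} depends only on $e(t-1)$, the channel outcome, and the sensing/process noises of the served packet (if any), while the instantaneous cost $e^2(t)$ is a deterministic function of the updated state. By the standard Bellman principle, it suffices to show that at every stage $s \in [1,T-1]$ and every realizable history $\mathcal{F}_{s-1}$, the index decision is a conditional one-step optimizer among all policies that are required to use $\pi_{idx}$ from $s+1$ onward.

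Second, I would exploit a time-shift symmetry of Theorem~\ref{th:index_opt}: none of its ingredients are tied to the absolute starting time $1$. The packet ordering of Lemma~\ref{lemma:index_order} is time-invariant, the error recursion~\eqref{eq:error_update_lemma} is time-homogeneous, and the key interchange computation~\eqref{ineq:pol_1} only uses relative products $\alpha_{s,t}$ together with the difference $\Delta(\psi_a,\psi_b,\cdot)$, which carries an explicit $a^{2t}$ scaling by Lemma~\ref{lemma:index_order}. Hence the same one-step-deviation inequality holds with $t=1$ replaced by any stage $s$: conditionally on $\mathcal{F}_{s-1}$, swapping the index-optimal packet in $Q(s)$ with any other packet and then following $\pi_{idx}$ gives weakly higher conditional expected cost.

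Third, I would conclude by a telescoping interchange induction. For any admissible policy $\pi$, define the hybrid family $\pi^{(k)}$ that agrees with $\pi$ on $[1,k]$ and with $\pi_{idx}$ on $[k+1,T-1]$, so that $\pi^{(T-1)} = \pi$ and $\pi^{(0)} = \pi_{idx}$. Applying the time-shifted Theorem~\ref{th:index_opt} conditionally on $\mathcal{F}_{k-1}$ yields $J(\pi^{(k)}) \ge J(\pi^{(k-1)})$ for every $k$, and the telescoping sum delivers $J(\pi) \ge J(\pi_{idx}) = J^{\star}(e(0);p;T)$, as required.

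The step I expect to be the main obstacle is verifying that the interchange construction of Theorem~\ref{th:index_opt}, in particular the stopping-time pair $T_1,T_2$, the served-sequence equality of Lemma~\ref{lemma:ordered_sets_eq}, and the sign analysis supplied by Lemma~\ref{lemma:negative_delta}, all transport verbatim to the conditional problem at a generic stage $s$ rather than remaining tied to $t=1$. The $a^{2t}$ scaling made explicit in Lemma~\ref{lemma:index_order} is precisely what makes this shift cost-free, since at stage $s$ the interchange reduces to the one at stage $1$ up to a positive multiplicative factor that does not affect the sign of the bracketed quantity in~\eqref{ineq:pol_1}. One also has to ensure that Assumption~\ref{assum:1}, which underlies Lemma~\ref{lemma:negative_delta}, is used only through bounds on $|a|,|a_c|$ and $K$ and therefore remains in force uniformly over $s$.
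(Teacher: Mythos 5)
Your proposal is correct and follows essentially the same route as the paper: the paper also casts the problem as a finite-horizon MDP and combines the one-step-deviation result of Theorem~\ref{th:index_opt} with an induction over stages (phrased there as backward DP induction with Lemma~\ref{lemma:index_opt} as the base case, rather than your forward telescoping over hybrid policies, which is the same interchange argument). Your explicit justification of the time-shift of Theorem~\ref{th:index_opt} to a generic stage $s$ is a point the paper leaves implicit, and it is handled correctly.
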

\begin{proof}
Since the problem~\eqref{eq:cost} is a Markov decision process (MDP)~\cite{puterman}, we can obtain an optimal policy by solving the corresponding dynamic programming recursions. Let us assume that Index Policy is optimal when the scheduler has to make decisions for times $t\in [s,T]$, or equivalently the DP backward induction until time $t=s$ yields us $\pi_{idx}$. It then follows from Theorem~\ref{th:index_opt} that Index Policy continues to be optimal at time $t=s-1$, i.e. the DP recursion for time $t=s-1$ also yields $\pi_{idx}$. Since it was shown in Lemma~\ref{lemma:index_opt} that Index Policy is optimal at time at time $t=T-1$, we conclude by using backward induction that it is also optimal for all times $t\in [1,T]$. 
\end{proof}

\section{Multiple Processes Sharing a Common Network For Communicating Update Packets}\label{sec:multiple_users}
In the setup considered so far, there is only a single process of interest. However, modern cyber-physical system (CPS) such as the smart-grid, vehicular networks, etc. are comprised of multiple complex processes that need to be monitored in real-time by their respective destination nodes. 

Thus, we now consider the case when multiple processes share the same network in order to transport their packets containing updates to their respective destination nodes. 
As shown in Fig.~\ref{fig:decomposition_1} there are $N$ processes denoted $x_i(t),i\in [1,N]$. The evolution of the $i$-th process $x_i(t)$ is described as
\begin{align}\label{eq:n_dyn}
x_i(t+1) = a_i x_i(t) + w_i(t), t\in [1,T-1],
\end{align}
where the noise process $w_i(t)$ is i.i.d. across times for each of the $N$ processes, has a Gaussian distribution with $0$ mean with variance $\sigma^2_{i,p}$. The $w_{i}(t)$ are also independent across different processes. 

Each process is monitored by multiple sources. For sake of simplicity, we assume that there are $M$ sources for each of the $N$ processes, though the results can easily be generalized to the case when the numbers of sources differ across processes. The sources share a single network node $\mathcal{N}$ that transmits their packets to their respective destination nodes $\mathcal{D}_{i},i\in [1,N]$ via an unreliable link that has a reliability $p$. $\mathcal{N}$ maintains $N$ separate queues $Q_i(t),i\in [1,N]$, and stores packets from the sources of process $i$ in $Q_i(t)$. The destination node $\mathcal{D}_i$ maintains a Kalman-like linear filter for estimating $x_i(t)$, and its estimate is denoted $\hat{x}_i(t)$ which is updated as in~\eqref{eq:kalman_filter} with the term $\hat{s}(t)$ replaced by $\hat{x}_i(t)$.

The network node $\mathcal{N}$ makes scheduling decisions $u(t),t\in [1,T-1]$ regarding which packet from the $N$ queues, i.e., $\cup_{i=1}^{N} Q_i(t)$, should be scheduled for transmission. It can utilize its past observations, i.e.,  $\left\{Q(s)\right\}_{s=1}^{t},\left\{u(s),x_{\ell}(s)\right\}_{s=1}^{t-1}$, where $x_{\ell}(s)$ is the state of the channel which connects $\mathcal{N}$ to the destination node of the queue which was served at time $s$. More concretely, it can utilize a) the age values and the sensor precision information of all the packets it has received so far  b) the channel state value for the channels it used for past transmissions. 

The objective to be minimized is the cumulative quadratic estimation error of $N$ users during the time period $[1,T]$. Thus,
the \emph{Multi Process Scheduling Problem} is:
\begin{align}\label{eq:problem_n}
&\min_{\pi}J(\pi) :=\mathbb{E}_{\pi}\sum_{i=1}^{N}\left(\sum_{t=1}^{T} e^2_{i}(t)\right),
\end{align}
where $e_i(t)$ is the estimation error for $i$-th process.

It can be shown, using standard results on controlled Markov processes~\cite{kumar,puterman}, that it suffices to consider only Markovian policies, i.e., those policies for which $u(t)$ is a function of $\left\{Q_i(t)\right\}_{i=1}^{N}$. However, to derive the optimal policy, we need to obtain the value function associated with the stochastic control problem~\eqref{eq:problem_n} by solving the Dynamic Programming recursions~\cite{kumar,puterman}. For the problem~\eqref{eq:problem_n} the system state at time $t$ is described by $\left\{Q_i(t)\right\}_{i=1}^{N}$, i.e., the age and sensor precision values for all the packets present in the $N$ queues. The state-space thus grows exponentially with $N$. Since the computational complexity of Dynamic Programming recursions is proportional to the size of state-space, it is impractical to solve for the optimal policy. Thus, we now restrict the optimization problem~\eqref{eq:problem_n} to a particular class of policies, and this allows us to obtain a tractable solution.

Let the scheduling decision at time $t$ be $u(t)=\left(u_1(t),u_2(t)\right)$,  where $u_1(t)\in [1,N]$ decides which of the $N$ queues is served, while $u_2(t)$ denotes the packet from $Q_{u_1(t)}(t)$ that is scheduled for transmission. We will restrict ourselves to the class of policies under which the $u_1(t)$'s are i.i.d. across time and do not depend upon system state $\left\{Q_i(t) \right\}_{i\in[1,N]}$. Such a policy is parametrized by a probability vector $\vec{p}:=\left\{ p_i \right\}_{i\in [1,N]}$ and $u_1(t)=i$ w.p. $p_i$ at each time. Thus, $\vec{p}\in \Delta(N)$, where $\Delta(N)$ is the $N$-dimensional simplex. We denote the class of such policies by $\Pi_{i.i.d.}$, i.e.,
\begin{align*}
\Pi_{i.i.d.}  := \big\{ \pi: & \mathbb{P}\left(u_1(t) =i | \left\{Q_i(s)\right\}_{i=1}^{N}, s\in [1,t] \right) = p_i, \\ 
&\forall t, i\in [1,N]  \big\}.
\end{align*}
Fix a $\vec{p}\in\Delta(N)$ and let $u_1(t)$ be according to $\vec{p}$. We will now derive a scheduling algorithm that will prioritize the packets on basis of their ages and sensor qualities, thereby making the decisions $u_2(t)$, i.e., we will solve
\begin{align}
&\min_{\left\{u_2(t)\right\}_{t\in [1,T]}}J :=\mathbb{E}\sum_{i=1}^{N}\left(\sum_{t=1}^{T} e^2_{i}(t)\right)\label{eq:problem_n_iid}\\
&\mbox{ s.t. } u_1(t) \sim \vec{p}~\mbox{ i.i.d. } \label{eq:problem_n_iid_const}
\end{align}

Now, since the decisions $u_1(t)$ are chosen i.i.d. according to $\vec{p}$, in order to analyze the combined system comprising of $N$ processes, we could equivalently assume that we are dealing with $N$ systems, where each system is composed of a \emph{single process} that is being estimated remotely. At the beginning of time-slot $t$, the queue for process $i$ obtains access to the channel for transmitting packet The packet transmission, if any, is successful w.p. $p$. Thus, the scheduler for the $i$-th process has to make scheduling decisions regarding the packets of \emph{only} process $i$, and can base them on the knowledge of only $Q_i(t)$ and not $\left\{Q_i(t)\right\}_{i\in [1,N] }$.

This decomposition property is shown in Fig.~\ref{fig:decomposition_2}. Next, we make this decomposition property concrete.
\begin{figure*}[!h]
\begin{subfigure}{\textwidth}
	\centering
	\includegraphics[width=0.5\textwidth]{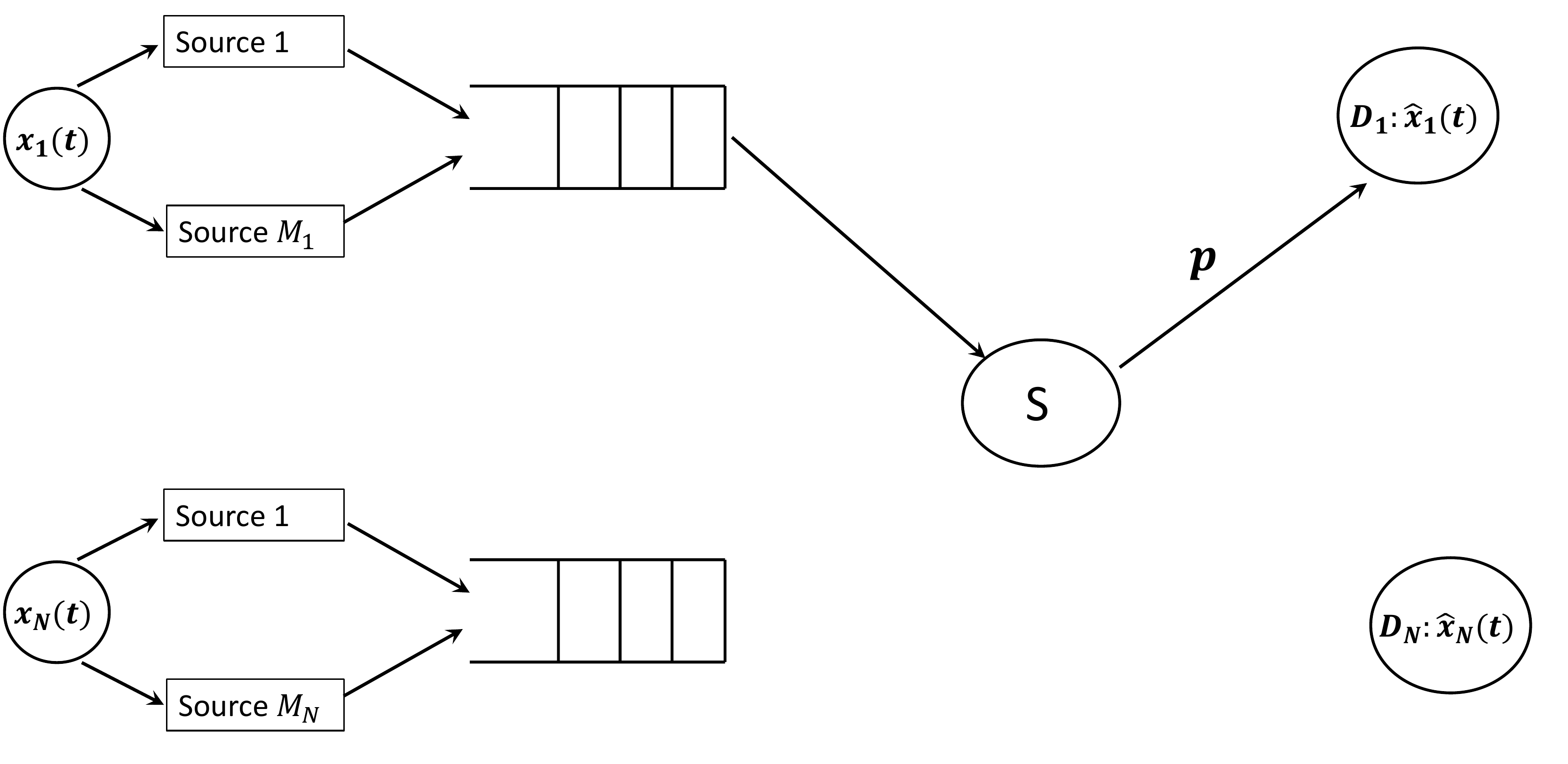}
	\caption{The original problem requires making decisions for $N$ processes, and suffers from the curse of dimensionality since the computational complexity of obtaining optimal policy grows exponentially with $N$.}
	\label{fig:decomposition_1}
	\end{subfigure}
\begin{subfigure}{\textwidth}
	\centering
	\includegraphics[width=0.5\textwidth]{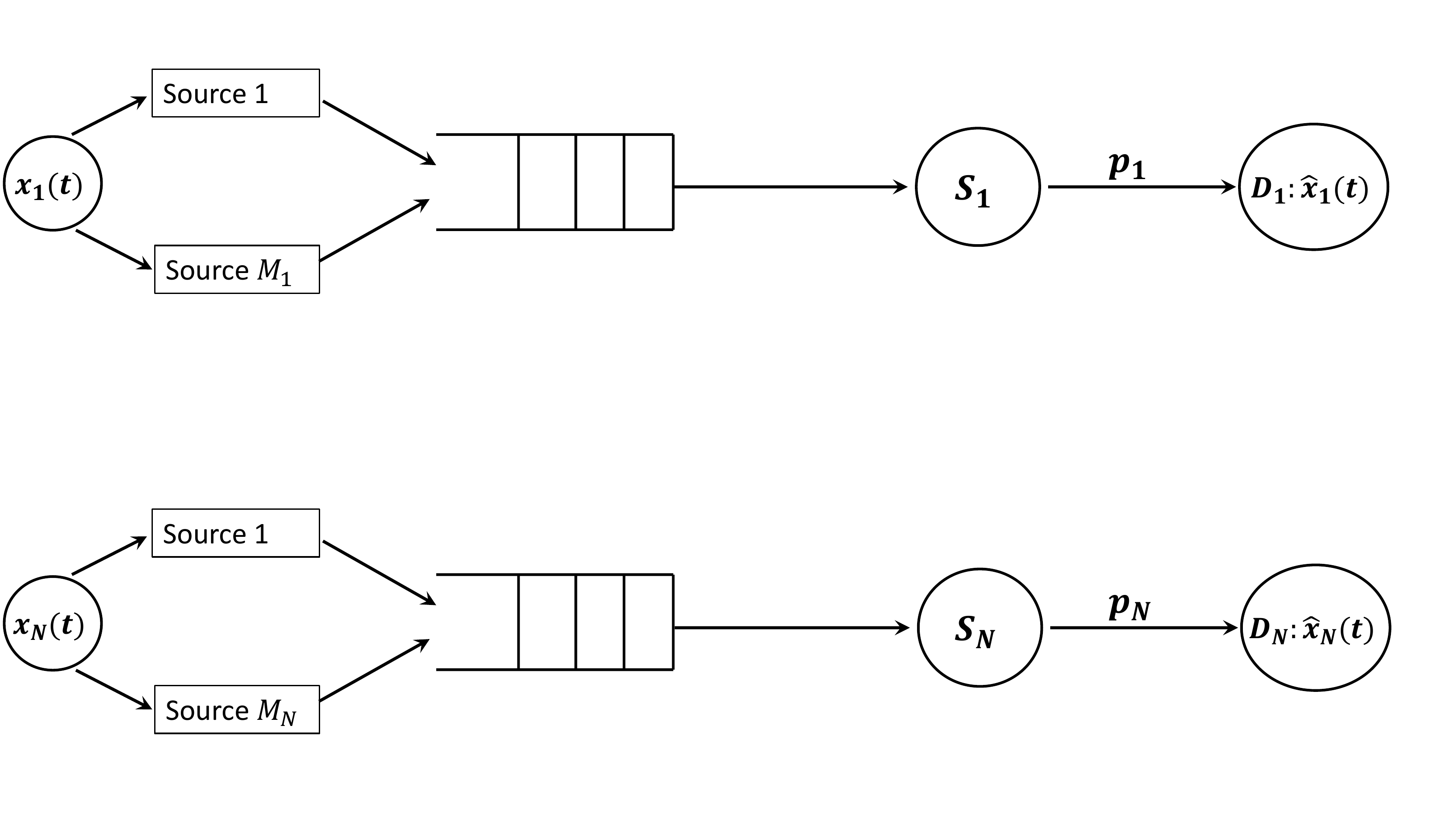}
	\caption{The problem of Fig.~\ref{fig:decomposition_1} decomposes into $N$ single-process problems, so that its complexity grows linearly with $N$.}
	\label{fig:decomposition_2}
	\end{subfigure}
\end{figure*}
Consider the problem~\eqref{eq:problem_n_iid}-\eqref{eq:problem_n_iid_const} of optimally scheduling data packets for $N$ processes while restricting to the class $\Pi_{i.i.d.}$. We will obtain a solution to it by solving $N$ \emph{single-process problems}. We define the single-process problems below.
\begin{definition}(Single Process Scheduling Problem)
The optimal scheduling problem for the $i$-th process is described as follows. There is a single Gauss-Markov process $x_i(t)$ that evolves as~\eqref{eq:n_dyn}. $M$ sensors record observations of the process $x_i(t)$, encode it into data packets and send them to the network node $\mathcal{N}_i$. $\mathcal{N}_i$ is connected to the destination node $\mathcal{D}_i$ via a channel of reliability $p_i$. The node $\mathcal{N}_i$ gets the opportunity to transmit if $u_1(t)=i$. The process $u_i(t)\in [1,N]$ is i.i.d. and distribted according to $\vec{p}$. 
The scheduler at $\mathcal{N}_i$ is denoted $\pi_i$, and has to make scheduling decisions $u_2(t)$ on the basis of $Q_i(t),u_1(t)$ in order to minimize the estimation error
\begin{align*}
\mathbb{E}_{\pi_i} \left(\sum_{t=1}^{T} e^2_i(t) \right).
\end{align*}
The Index Policy is optimal for the above problem, and attains a cost equal to $J^{\star}_i(e_i(0);T)$. We denote the index policy applied to solve the above problem for process $i$ by $\pi^{(i)}_{idx}$.
\end{definition}
Next, we obtain a scheduler for the original problem~\eqref{eq:problem_n_iid}-\eqref{eq:problem_n_iid_const} by combining solutions of the \emph{single-process problems}. We denote by $\otimes_{i=1}^{N} \pi^{(i)}_{idx}$ the scheduler which implements the policy $\pi^{(u_1(t))}_{idx}$ at time $t$.

\begin{theorem}\label{lemma:separable_opt_pol}
Consider the problem~\eqref{eq:problem_n_iid}-\eqref{eq:problem_n_iid_const} of scheduling data packets for $N$ processes, where the channel access decisions are made in an i.i.d. manner. The policy $\otimes_{i=1}^{N} \pi^{(i)}_{idx}$ is optimal, and its estimation error cost is equal to $\sum_{i=1}^{N} J^{\star}_i(e_i(0);pp_i;T)$.
\end{theorem}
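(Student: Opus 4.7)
The plan is to argue that the constraint $u_1(t)\sim \vec{p}$ i.i.d. exactly decouples the $N$-process problem into $N$ independent single-process problems, each with effective channel success probability $pp_i$, and then invoke the single-process optimality Corollary on each subproblem. First, by linearity of expectation, the cost splits as
\begin{equation*}
J(\pi) \;=\; \sum_{i=1}^N \mathbb{E}_{\pi}\sum_{t=1}^T e_i^2(t).
\end{equation*}
Because the driving noises $w_i$, the measurement noises, and the packet arrivals are independent across processes, and because the $u_1(t)$'s are i.i.d.\ and independent of everything else (in particular of $\{Q_j(t)\}_{j\neq i}$), the marginal stochastic process governing successful deliveries to $\mathcal{D}_i$ is Bernoulli with parameter $pp_i$ per time slot, i.i.d.\ conditionally on the scheduler's packet choice: slot $t$ delivers to $\mathcal{D}_i$ iff $u_1(t)=i$ (prob.\ $p_i$) and the channel succeeds (prob.\ $p$), independent across $t$.

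Next, I would establish a sufficient-statistic argument showing that in minimizing the $i$-th summand of the cost, the within-queue decision $u_2(t)$ at those slots when $u_1(t)=i$ need only depend on $Q_i(t)$, and not on the other queues $\{Q_j(t)\}_{j\neq i}$. The reason is independence: conditional on $Q_i(t)$, the states of the other queues carry no information about $x_i(t)$, $\hat{x}_i(t)$, or any future noise realizations affecting process $i$. Consequently the $i$-th summand of $J(\pi)$, viewed as a function of the within-queue policy $\pi_i$, is exactly the cost in the Single Process Scheduling Problem for process $i$ with channel reliability $pp_i$.

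Applying the earlier Corollary to each of these decoupled subproblems, the Index Policy $\pi^{(i)}_{idx}$ is optimal for subproblem $i$ with minimal cost $J^{\star}_i(e_i(0);pp_i;T)$. Since the subproblems are fully separated (different queues, independent noises, independent channel access conditional on $\vec{p}$), the product policy $\otimes_{i=1}^N \pi^{(i)}_{idx}$ simultaneously attains all $N$ per-process minima, giving
\begin{equation*}
J\bigl(\otimes_{i=1}^N \pi^{(i)}_{idx}\bigr) \;=\; \sum_{i=1}^N J^{\star}_i\bigl(e_i(0);pp_i;T\bigr),
\end{equation*}
which lower-bounds the cost of any policy in $\Pi_{i.i.d.}$ by the term-by-term argument above.

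The main obstacle is the sufficient-statistic step: one must rigorously verify that replacing a general $\pi$-measurable policy by a policy whose $i$-th decision depends only on $Q_i(t)$ and on $u_1(\cdot)$ through the indicator $\mathbf{1}\{u_1(t)=i\}$ does not increase cost. This is a standard independence-based reduction (condition on the sigma-algebra generated by $\{u_1(s),Q_j(s)\}_{s\leq t,\,j\neq i}$ and use that $e_i$'s future conditional law depends only on $Q_i$ and on the $pp_i$-Bernoulli delivery process), but it is the only step that is not an immediate consequence of linearity of expectation and of the earlier Corollary.
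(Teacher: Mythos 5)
Your proposal is correct and follows essentially the same route as the paper, which offers no separate formal proof of this theorem but instead relies on exactly the decomposition argument you spell out in the surrounding discussion: linearity of expectation splits the cost, the i.i.d.\ channel-access process makes deliveries to $\mathcal{D}_i$ an i.i.d.\ Bernoulli($pp_i$) sequence independent of the other queues, and the single-process optimality result is then invoked per process. Your explicit flagging of the sufficient-statistic step (that $u_2(t)$ need only depend on $Q_i(t)$ when $u_1(t)=i$) is in fact more careful than the paper, which asserts this decoupling without proof.
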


We can further decrease the cost~\eqref{eq:problem_n} by optimizing over the $u_1(t)$ proess, i.e., the choice of vector $\vec{p}$. One possibility to perform this is to use an ``online learning" algorithm such as the Simultaneous Perturbation Stochastic Approximation (SPSA)~\cite{spall1992multivariate}.

\section{Conclusion}\label{sec:conclusion}
We considered the problem of optimally scheduling pkts. in order to minimize the estimation error of a remote estimator that relies upon data pkts. that it receives from multiple sources.
We have shown that the optimal policy prioritizes pkts. on the basis of their \emph{Value of Information}. The Value of Information contained in a pkt. depends not only upon its age, i.e., how stale is the information content, but also on the sensor precision, i.e., the quality of information. Such a policy is easily implementable.

Further investigations include important extensions of the results that we obtained in this work. These include joint optimization of the filtering and the scheduling processes, decentralized scheduling over multihop networks, vector-valued processes, bounding the sub-optimality arising from using a constant gain, etc.

\section{Appendix}
\subsection{Proofs of Lemmas}
\begin{proof}[Lemma~\ref{lemma:index_opt}]
Since,
\begin{align*}
 a^{ \tau_{\psi}(s)  }y_{\psi} &=  a^{ \tau_{\psi}(s) } x(t - \tau_{\psi}(s) ) + a^{ \tau_{\psi}(s)  } w_s(t-\tau_{\psi}(s)),\\
\mbox{ and } ~~x(s) &= a^{ \tau_{\psi}(s) } x(s - \tau_{\psi}(s) ) + \sum_{l=1}^{\tau_{\psi}(s)} a^{l} w(s - l),
\end{align*}
we have
\begin{align} \label{eq:weighted_measure}
  a^{ \tau_{\psi}(s)  }y_{\psi} = x(s)- \sum_{l=1}^{\tau_{\psi}(s) } a^{l} w(s - l) + a^{ \tau_{\psi}(s)  } w_s(t-\tau_{\psi}(s)).
\end{align}

Recall that the filter update equation is given by
\begin{align}
\hat{x}(t+1) &= a \hat{x}(t) + K \left(  a^{ \tau_{\psi(t)}(t+1) }y_{\psi(t)} - a \hat{x}(t) \right)\notag\\
&= a \hat{x}(t) + a K \left(  a^{ \tau_{\psi(t)}(t) }y_{\psi(t)} -  \hat{x}(t) \right)\label{eq:kalman_filter_repeat}
\end{align}
Substituting~\eqref{eq:weighted_measure} into~\eqref{eq:kalman_filter_repeat} with $s=t$, and $\psi = \psi(t)$ we obtain
\begin{small}
\begin{align*}
\hat{x}(t+1) = a \hat{x}(t) + a K\left(  x(t) - \hat{x}(t)\right) + w_{s,\psi(t)}(t) + w_{p,\psi(t)}(t) ,
\end{align*}
\end{small}
\noindent where $w_{s,\psi(t)}(t) + w_{p,\psi(t)}(t)$ are the sensing and process noise as defined earlier. Noting that $x(t+1)=a x(t)+w(t)$, the update for the error $e(t)$ is obtained by subtracting $\hat{x}(t+1)$ from $x(t+1)$, and is equal to
\begin{align}\label{eq:error_update}
e(t+1) &= a (1-K)e(t) - \left(w_{s,\psi(t)}(t) + w_{p,\psi(t)}(t)\right) + w(t).
\end{align}

Now define
\begin{align}\label{def:noises_variance}
W^{2}_{s,\psi}(t) : &=  a^{ 2\tau_{\psi}(t)  } \sigma^2_{s,\psi} \notag\\
W^{2}_{p,\psi }(t)  : &=  \sigma^2 \frac{a^{2\tau_{\psi}(t)} -1}{a^2 -1} + (1+ K)^2 \sigma^2.
\end{align}
\end{proof}

\begin{proof}[Lemma~\ref{lemma:compare_times}]
We note that the index of $\psi_1$ was lesser than that of $\psi_2$ at time $t=1$. Since it was shown in Lemma~\ref{lemma:index_order} that the order of the indices of packets does not change with time, this implies that the index of $\psi_2$ under $\pi_{idx}$ is always greater than the index of $\psi_1$ under $\tilde{\pi}$. Since both $\tilde{\pi},\pi_{idx}$ implement the least index first rule, and since the set of packets received by both policies are the same, it then follows that $\psi_1$ will be served under $\tilde{\pi}$ earlier than the time at which $\psi_2$ will be served under $\pi_{idx}$. This completes the proof.
\end{proof}

\subsection{Some Useful Results}
\begin{lemma}\label{lemma:negative_delta}
Let the system dynamics $a$, and the estimator gain $K$ satisfy the Assumption~\ref{assum:1}, and let  $\alpha_{s,t}:= \prod_{m=s}^{t} a^2(t)$.

 Let $\mathcal{T}$ be a stopping time that satisfies $\mathcal{T}>1$. We then have that 
\begin{align*}
 \left( \sum_{t\in [1,\mathcal{T}]}\alpha_{1,t} - \left( (a^2)^{\mathcal{T}} -  \alpha_{1,\mathcal{T}} \right) \sum_{t > \mathcal{T}}\alpha_{\mathcal{T},t} \right) >0.
\end{align*}
\end{lemma}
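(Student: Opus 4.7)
The plan is to exploit Assumption~1, specifically the bound $K \le (1-a^2)/a^2$ (equivalently $a^2(1+K) \le 1$), together with the implicit constraint $K \in [0,2]$ coming from the filter design, in order to obtain $a_c^2 = a^2(1-K)^2 \le a^2$. Consequently $a^2(t) \in \{a^2, a_c^2\} \subseteq (0, a^2]$ for every $t$, which immediately gives the pointwise bound $\alpha_{s,t} \le a^{2(t-s+1)}$ and, in particular, $\alpha_{1,\mathcal{T}} \le a^{2\mathcal{T}}$, so the coefficient $a^{2\mathcal{T}} - \alpha_{1,\mathcal{T}}$ is non-negative. If it vanishes (the case in which every $a(t)=a$ on $[1,\mathcal{T}]$), the bracket is the positive first sum and we are done; otherwise, the inequality is genuinely non-trivial.

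Next, I would factor the tail so that the geometric structure is visible: $\alpha_{\mathcal{T},t} = a^2(\mathcal{T})\prod_{j=1}^{t-\mathcal{T}} a^2(\mathcal{T}+j)$, which together with $a^2(m)\le a^2<1$ yields
\begin{align*}
\sum_{t>\mathcal{T}}\alpha_{\mathcal{T},t} \;\le\; a^2(\mathcal{T})\cdot\frac{a^2}{1-a^2}.
\end{align*}
Combined with the trivial bound $a^{2\mathcal{T}} - \alpha_{1,\mathcal{T}} \le a^{2\mathcal{T}}$, this shows the entire subtracted term is at most $a^{2\mathcal{T}+2}\,a^2(\mathcal{T})/(1-a^2)$.

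For the LHS I would use that $\alpha_{1,t}$ is monotonically decreasing in $t$ (each factor is less than one), so $\sum_{t=1}^{\mathcal{T}}\alpha_{1,t} \ge \mathcal{T}\,\alpha_{1,\mathcal{T}}$, and also $\sum_{t=1}^{\mathcal{T}}\alpha_{1,t} \ge \alpha_{1,1} = a^2(1) \ge a_c^2$. Then the lemma reduces to an algebraic inequality relating $\mathcal{T}$, $\alpha_{1,\mathcal{T}}$, and $a^{2\mathcal{T}}$, which I would close using the telescoping factorization
\begin{align*}
a^{2\mathcal{T}} - \alpha_{1,\mathcal{T}} \;=\; \sum_{m=1}^{\mathcal{T}} a^{2(m-1)}\bigl(a^2-a^2(m)\bigr)\prod_{j>m} a^2(j),
\end{align*}
so that the deficit is distributed across the slots where $a(m)=a_c$. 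Feeding this back and invoking $a^2(1+K) \le 1$ (which controls the ratio $a^4/(1-a^2)$ against $a_c^2$) should produce strict positivity.

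The main obstacle I foresee is that both sides of the inequality can be simultaneously small (of order $a_c^{2\mathcal{T}}$), so the loose bounds above are likely too crude on their own. To handle this cleanly I would fall back on induction on $\mathcal{T}$: the base case $\mathcal{T}=2$ is a direct algebraic check from Assumption~1, and the inductive step uses the recursion $\sum_{t=1}^{\mathcal{T}+1}\alpha_{1,t} = \sum_{t=1}^{\mathcal{T}}\alpha_{1,t} + \alpha_{1,\mathcal{T}+1}$ together with the multiplicative update $\alpha_{1,\mathcal{T}+1} = a^2(\mathcal{T}+1)\,\alpha_{1,\mathcal{T}}$ and $\sum_{t>\mathcal{T}+1}\alpha_{\mathcal{T}+1,t} = \sum_{t>\mathcal{T}}\alpha_{\mathcal{T},t}/a^2(\mathcal{T}) - 1$, which propagates the positivity of the bracket from one stopping time to the next.
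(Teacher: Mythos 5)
Your opening moves---observing that Assumption~1 gives $a_c^2\le a^2$, hence $\alpha_{1,\mathcal{T}}\le a^{2\mathcal{T}}$ so the subtracted coefficient is non-negative, and that the tail sum is dominated by a geometric series with ratio $a^2$---coincide with the first two steps of the paper's proof. The gap is everything after that. The paper closes the argument directly: it rewrites the bracket as $\sum_{t\le\mathcal{T}}\alpha_{1,t}+\alpha_{1,\mathcal{T}}\sum_{t>\mathcal{T}}\alpha_{\mathcal{T},t}-(a^2)^{\mathcal{T}}\sum_{t>\mathcal{T}}\alpha_{\mathcal{T},t}$, lower-bounds each $\alpha_{1,t}$ by $(a_c^2)^{t}$, upper-bounds the tail by $1/(1-a^2)$, and thereby reduces the whole lemma to the single scalar inequality $1+a_c^2-a^2/(1-a^2)\ge 0$, which it then verifies is implied by $K\le(1-a^2)/a^2$ after substituting $a_c=a(1-K)$. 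Your write-up never reaches such a terminal, checkable inequality: the sentence ``invoking $a^2(1+K)\le 1$ \ldots should produce strict positivity'' is precisely the step that has to be carried out, and your own diagnosis that the crude bounds ``are likely too crude on their own'' concedes that the first half of your plan does not suffice.

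The induction you fall back on is not a repair, for two concrete reasons. First, the tail recursion is $\sum_{t>\mathcal{T}+1}\alpha_{\mathcal{T}+1,t}=\sum_{t>\mathcal{T}}\alpha_{\mathcal{T},t}/a^2(\mathcal{T})-a^2(\mathcal{T}+1)$, not $\ldots-1$. Second, and more seriously, the coefficient $a^{2(\mathcal{T}+1)}-\alpha_{1,\mathcal{T}+1}=a^{2(\mathcal{T}+1)}-a^2(\mathcal{T}+1)\,\alpha_{1,\mathcal{T}}$ is a clean multiple of $a^{2\mathcal{T}}-\alpha_{1,\mathcal{T}}$ only when $a(\mathcal{T}+1)=a$; when $a(\mathcal{T}+1)=a_c$ it strictly exceeds $a_c^2\bigl(a^{2\mathcal{T}}-\alpha_{1,\mathcal{T}}\bigr)$, so positivity of the bracket at $\mathcal{T}$ does not propagate to $\mathcal{T}+1$ by the algebra you describe, and the base case $\mathcal{T}=2$ is asserted rather than checked. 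I would drop the induction and the telescoping factorization and instead follow the direct chain of bounds down to the condition $1+a_c^2\ge a^2/(1-a^2)$, which is where the hypothesis $K\le(1-a^2)/a^2$ actually enters.
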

\begin{proof}
If the time $\mathcal{T}>1$, we have
\begin{align*}
& \sum_{t\in [1,\mathcal{T}]}\alpha_{1,t} - \left( (a^2)^{\mathcal{T}} -  \alpha_{1,\mathcal{T}} \right) \sum_{t > \mathcal{T}}\alpha_{T,t} \\
&=   \sum_{t\in [1,\mathcal{T}]}\alpha_{1,t} + \alpha_{1,\mathcal{T}} \sum_{t > \mathcal{T}}\alpha_{\mathcal{T},t} - \left( (a^2)^{\mathcal{T}} \right) \sum_{t > \mathcal{T}}\alpha_{\mathcal{T},t} \\
&\geq   \sum_{t\in [1,\mathcal{T}]}  (a^2_c)^{t} +  (a^2_c)^{\mathcal{T}} \sum_{t > \mathcal{T}}\alpha_{\mathcal{T},t} - \left( (a^2)^{\mathcal{T}} \right) \sum_{t > \mathcal{T}}\alpha_{\mathcal{T},t} \\
&\geq   \left(\sum_{t\in [1,\mathcal{T}]}  a^{2t}_c \right) - \frac{ (a^2)^{\mathcal{T}} -  (a^2_c)^{\mathcal{T}} }{1-a^2} \\
&\geq   1 +  a^2_c - \frac{ (a^2)^{\mathcal{T}} -  (a^2_c)^{\mathcal{T}} }{1-a^2} \\
&\geq   1 +  a^2_c - \frac{ (a^2)^{\mathcal{T}}  }{1-a^2} \\
&\geq   1 +  a^2_c - \frac{ a^2 }{1-a^2}, 
\end{align*}
where the inequalities follow from the fact that $0<a_c^2<a^2<1$. Hence, it suffices to show that $1 +  a^2_c - a^2 \slash (1-a^2)\ge 0$ under Assumption~\ref{assum:1}. Setting $a_c = a(1-K)$, this condition simplifies to 
\begin{align*}
1 \ge \frac{a^2  - a^2(1-K)^2 -a^4(1-K)^2 }{1- a^2}. 
\end{align*}
A sufficient condition to ensure the above inequality, is to satisfy
\begin{align*}
1 \ge \frac{a^2  - a^2(1-K)^2  }{1- a^2}, 
\end{align*}
or equivalently 
\begin{align*}
1 \ge a^2\frac{(2-K) K }{1- a^2} \equiv  K(2-K) \le (1-a^2)\slash a^2.
\end{align*}
Since $1-K>0$, the condition is satisfied if $K<\frac{1-a^2}{a^2}$.
\end{proof}
\bibliographystyle{IEEEtran}
\bibliography{combinedbib.bib}

\begin{thebibliography}{10}
\providecommand{\url}[1]{#1}
\csname url@samestyle\endcsname
\providecommand{\newblock}{\relax}
\providecommand{\bibinfo}[2]{#2}
\providecommand{\BIBentrySTDinterwordspacing}{\spaceskip=0pt\relax}
\providecommand{\BIBentryALTinterwordstretchfactor}{4}
\providecommand{\BIBentryALTinterwordspacing}{\spaceskip=\fontdimen2\font plus
\BIBentryALTinterwordstretchfactor\fontdimen3\font minus
  \fontdimen4\font\relax}
\providecommand{\BIBforeignlanguage}[2]{{%
\expandafter\ifx\csname l@#1\endcsname\relax
\typeout{** WARNING: IEEEtran.bst: No hyphenation pattern has been}%
\typeout{** loaded for the language `#1'. Using the pattern for}%
\typeout{** the default language instead.}%
\else
\language=\csname l@#1\endcsname
\fi
#2}}
\providecommand{\BIBdecl}{\relax}
\BIBdecl

\bibitem{kaul2012real}
S.~Kaul, R.~Yates, and M.~Gruteser, ``Real-time status: How often should one
  update?'' in \emph{INFOCOM, 2012 Proceedings IEEE}.\hskip 1em plus 0.5em
  minus 0.4em\relax IEEE, 2012, pp. 2731--2735.

\bibitem{DBLP:conf/allerton/KadotaUSM16}
\BIBentryALTinterwordspacing
I.~Kadota, E.~Uysal{-}Biyikoglu, R.~Singh, and E.~Modiano, ``Minimizing the age
  of information in broadcast wireless networks,'' in \emph{54th Annual
  Allerton Conference on Communication, Control, and Computing, Allerton 2016,
  Monticello, IL, USA, September 27-30, 2016}, 2016, pp. 844--851. [Online].
  Available: \url{https://doi.org/10.1109/ALLERTON.2016.7852321}
\BIBentrySTDinterwordspacing

\bibitem{DBLP:journals/ton/KadotaSUSM18}
\BIBentryALTinterwordspacing
I.~Kadota, A.~Sinha, E.~Uysal{-}Biyikoglu, R.~Singh, and E.~Modiano,
  ``Scheduling policies for minimizing age of information in broadcast wireless
  networks,'' \emph{{IEEE/ACM} Trans. Netw.}, vol.~26, no.~6, pp. 2637--2650,
  2018. [Online]. Available: \url{https://doi.org/10.1109/TNET.2018.2873606}
\BIBentrySTDinterwordspacing

\bibitem{DBLP:conf/isit/BedewySS17}
\BIBentryALTinterwordspacing
A.~M. Bedewy, Y.~Sun, and N.~B. Shroff, ``Age-optimal information updates in
  multihop networks,'' in \emph{2017 {IEEE} International Symposium on
  Information Theory, {ISIT} 2017, Aachen, Germany, June 25-30, 2017}, 2017,
  pp. 576--580. [Online]. Available:
  \url{https://doi.org/10.1109/ISIT.2017.8006593}
\BIBentrySTDinterwordspacing

\bibitem{adelberg1995applying}
B.~Adelberg, H.~Garcia-Molina, and B.~Kao, ``Applying update streams in a soft
  real-time database system,'' in \emph{ACM SIGMOD Record}, vol.~24, no.~2,
  1995, pp. 245--256.

\bibitem{Cho00synchronizinga}
J.~Cho and H.~Garcia-molina, ``Synchronizing a database to improve freshness,''
  2000, pp. 117--128.

\bibitem{DBLP:conf/icde/GolabJS09}
\BIBentryALTinterwordspacing
L.~Golab, T.~Johnson, and V.~Shkapenyuk, ``Scheduling updates in a real-time
  stream warehouse,'' in \emph{Proceedings of the 25th International Conference
  on Data Engineering, {ICDE} 2009, March 29 2009 - April 2 2009, Shanghai,
  China}, 2009, pp. 1207--1210. [Online]. Available:
  \url{https://doi.org/10.1109/ICDE.2009.202}
\BIBentrySTDinterwordspacing

\bibitem{2012ISIT-YatesKaul}
R.~D. Yates and S.~Kaul, ``Real-time status updating: Multiple sources,'' in
  \emph{Proc.~IEEE ISIT}, July 2012, pp. 2666--2670.

\bibitem{DBLP:conf/isit/Yates15}
\BIBentryALTinterwordspacing
R.~D. Yates, ``Lazy is timely: Status updates by an energy harvesting source,''
  in \emph{{IEEE} International Symposium on Information Theory, {ISIT} 2015,
  Hong Kong, China, June 14-19, 2015}, 2015, pp. 3008--3012. [Online].
  Available: \url{https://doi.org/10.1109/ISIT.2015.7283009}
\BIBentrySTDinterwordspacing

\bibitem{DBLP:journals/iotj/JiangKZZN19}
\BIBentryALTinterwordspacing
Z.~Jiang, B.~Krishnamachari, X.~Zheng, S.~Zhou, and Z.~Niu, ``Timely status
  update in wireless uplinks: Analytical solutions with asymptotic
  optimality,'' \emph{{IEEE} Internet of Things Journal}, vol.~6, no.~2, pp.
  3885--3898, 2019. [Online]. Available:
  \url{https://doi.org/10.1109/JIOT.2019.2893319}
\BIBentrySTDinterwordspacing

\bibitem{DBLP:conf/infocom/JiangZN019}
\BIBentryALTinterwordspacing
Z.~Jiang, S.~Zhou, Z.~Niu, and Y.~Cheng, ``A unified sampling and scheduling
  approach for status update in multiaccess wireless networks,'' in \emph{2019
  {IEEE} Conference on Computer Communications, {INFOCOM} 2019, Paris, France,
  April 29 - May 2, 2019}, 2019, pp. 208--216. [Online]. Available:
  \url{https://doi.org/10.1109/INFOCOM.2019.8737404}
\BIBentrySTDinterwordspacing

\bibitem{DBLP:conf/isit/SunPU17}
\BIBentryALTinterwordspacing
Y.~Sun, Y.~Polyanskiy, and E.~Uysal{-}Biyikoglu, ``Remote estimation of the
  wiener process over a channel with random delay,'' in \emph{2017 {IEEE}
  International Symposium on Information Theory, {ISIT} 2017, Aachen, Germany,
  June 25-30, 2017}, 2017, pp. 321--325. [Online]. Available:
  \url{https://doi.org/10.1109/ISIT.2017.8006542}
\BIBentrySTDinterwordspacing

\bibitem{cdcdelay}
{Rahul Singh and P.~R. Kumar}, ``Decentralized throughput maximizing policies
  for deadline-constrained wireless networks,'' \emph{IEEE Conference on
  Decision and Control}, 2015.

\bibitem{singh2018throughput}
\BIBentryALTinterwordspacing
R.~Singh and P.~R. Kumar, ``Throughput optimal decentralized scheduling of
  multihop networks with end-to-end deadline constraints: Unreliable links,''
  \emph{{IEEE} Trans. Automat. Contr.}, vol.~64, no.~1, pp. 127--142, 2019.
  [Online]. Available: \url{https://doi.org/10.1109/TAC.2018.2874671}
\BIBentrySTDinterwordspacing

\bibitem{guo2018risk}
X.~Guo, R.~Singh, P.~Kumar, and Z.~Niu, ``A risk-sensitive approach for packet
  inter-delivery time optimization in networked cyber-physical systems,''
  \emph{IEEE/ACM Transactions on Networking (TON)}, vol.~26, no.~4, pp.
  1976--1989, 2018.

\bibitem{guosingh}
\BIBentryALTinterwordspacing
------, ``A high reliability asymptotic approach for packet inter-delivery time
  optimization in cyber-physical systems,'' in \emph{Proceedings of the 16th
  ACM International Symposium on Mobile Ad Hoc Networking and Computing}, ser.
  MobiHoc '15.\hskip 1em plus 0.5em minus 0.4em\relax New York, NY, USA: ACM,
  2015, pp. 197--206. [Online]. Available:
  \url{http://doi.acm.org/10.1145/2746285.2746305}
\BIBentrySTDinterwordspacing

\bibitem{singh2015index}
R.~Singh, X.~Guo, and P.~R. Kumar, ``Index policies for optimal mean-variance
  trade-off of inter-delivery times in real-time sensor networks,'' in
  \emph{Computer Communications (INFOCOM), 2015 IEEE Conference on}.\hskip 1em
  plus 0.5em minus 0.4em\relax IEEE, 2015, pp. 505--512.

\bibitem{singh2015optimizing}
R.~Singh and P.~Kumar, ``Optimizing quality of experience of dynamic video
  streaming over fading wireless networks,'' in \emph{Decision and Control
  (CDC), 2015 IEEE 54th Annual Conference on}.\hskip 1em plus 0.5em minus
  0.4em\relax IEEE, 2015, pp. 7195--7200.

\bibitem{singh2015maxweight}
R.~Singh and A.~Stolyar, ``Maxweight scheduling: Asymptotic behavior of
  unscaled queue-differentials in heavy traffic,'' in \emph{ACM SIGMETRICS
  Performance Evaluation Review}, vol.~43, no.~1.\hskip 1em plus 0.5em minus
  0.4em\relax ACM, 2015, pp. 431--432.

\bibitem{rs}
------, ``Maxweight scheduling: Asymptotic behavior of unscaled
  queue-differentials in heavy traffic,'' in \emph{Proceedings of the 2015 ACM
  SIGMETRICS International Conference on Measurement and Modeling of Computer
  Systems}, ser. SIGMETRICS '15.\hskip 1em plus 0.5em minus 0.4em\relax New
  York, NY, USA: ACM, 2015, pp. 431--432.

\bibitem{atilla1}
R.~Li, A.~Eryilmaz, and B.~Li, ``Throughput-optimal wireless scheduling with
  regulated inter-service times,'' in \emph{INFOCOM, 2013 Proceedings IEEE},
  April 2013, pp. 2616--2624.

\bibitem{atilla2}
B.~Li, R.~Li, and A.~Eryilmaz, ``Heavy-traffic-optimal scheduling with regular
  service guarantees in wireless networks,'' in \emph{Proceedings of the
  Fourteenth ACM International Symposium on Mobile Ad Hoc Networking and
  Computing}, ser. MobiHoc '13.\hskip 1em plus 0.5em minus 0.4em\relax ACM,
  2013, pp. 79--88.

\bibitem{kumar}
{P. R. Kumar and P. Varaiya}, \emph{{Stochastic systems: Estimation,
  identification and adaptive control}}.\hskip 1em plus 0.5em minus 0.4em\relax
  Prentice Hall Inc., Englewood Cliffs, 1986.

\bibitem{DBLP:journals/automatica/ShiXM09}
\BIBentryALTinterwordspacing
L.~Shi, L.~Xie, and R.~M. Murray, ``Kalman filtering over a packet-delaying
  network: {A} probabilistic approach,'' \emph{Automatica}, vol.~45, no.~9, pp.
  2134--2140, 2009. [Online]. Available:
  \url{https://doi.org/10.1016/j.automatica.2009.05.018}
\BIBentrySTDinterwordspacing

\bibitem{puterman}
M.~L. Puterman, \emph{Markov Decision Processes: Discrete Stochastic Dynamic
  Programming}, 1st~ed.\hskip 1em plus 0.5em minus 0.4em\relax New York, NY,
  USA: John Wiley \& Sons, Inc., 1994.

\bibitem{spall1992multivariate}
J.~C. Spall, ``Multivariate stochastic approximation using a simultaneous
  perturbation gradient approximation,'' \emph{IEEE transactions on automatic
  control}, vol.~37, no.~3, pp. 332--341, 1992.

\end{thebibliography}
\end{document}